\newcommand{\Draft}{0}   
  \newcommand{\fixMe}[2][]{
    \typeout{***** ERROR: fixMe still in final version *****}
  }
  \newcommand{\fixMe}[2][] {[{\bf #1}] {\bf \marginpar{\large FIX}} {\em #2}}
\newtheorem{Theorem}{Theorem}
\newtheorem{lemma}{Lemma}
\begin{document}

\title{Label-Guided Graph Exploration with Adjustable Ratio of Labels\thanks{This
work was done while Meng Zhang worked with Leszek G\c{a}sieniec at
University of Liverpool.  }}

\author{
Meng Zhang$^\mathtt{a}$\thanks{Corresponding author.\newline
\indent \hspace{0.5em}
\emph{E-mail addresses}: zhangmeng@jlu.edu.cn (M. Zhang), whdzy2000@vip.sina.com (Y. Zhang), jtang@cse.sc.edu (J. Tang) }, Yi Zhang$^\mathtt{b}$, Jijun Tang$^\mathtt{c}$\\
\small $^\mathtt{a}$\emph{College of Computer Science and
Technology, Jilin
University, Changchun, China}\vspace{-2pt} \\
\small $^\mathtt{b}$\emph{Department of Computer Science, Jilin
Business and
Technology College, Changchun, China}\vspace{-2pt}\\
\small $^\mathtt{c}$\emph{Department of Computer Science \&
Engineering, Univ. of South
Carolina, USA}\vspace{-2pt}\\
}

\date{}

\if 0
\IEEEauthorblockN{Meng Zhang\IEEEauthorrefmark{1}, Yi
Zhang\IEEEauthorrefmark{2} and Jijun Tang\IEEEauthorrefmark{3}}

\IEEEauthorblockA{\IEEEauthorrefmark{1}College of Computer Science
and Technology, Jilin
University, Changchun, China\\
Email: zhangmeng@jlu.edu.cn\\}
\IEEEauthorblockA{\IEEEauthorrefmark{2}Department of Computer
Science, Jilin Business and
Technology College, Changchun, China\\
Email: whdzy2000@vip.sina.com\\}
\IEEEauthorblockA{\IEEEauthorrefmark{3}Department of Computer
Science \& Engineering, Univ. of South
Carolina, USA\\
Email: jtang@cse.sc.edu} }\fi

\maketitle

\begin{abstract}{
The graph exploration problem is to visit all the nodes of a
connected graph by a mobile entity, e.g., a robot. The robot has
no a priori knowledge of the topology of the graph or of its size.
Cohen et~al.~\cite{Ilcinkas08} introduced label guided graph
exploration which allows the system designer to add short labels
to the graph nodes in a preprocessing stage; these labels can
guide the robot in the exploration of the graph. In this paper, we
address the problem of adjustable 1-bit label guided graph
exploration. We focus on the labeling schemes that not only enable
a robot to explore the graph but also allow the system designer to
adjust the ratio of the number of different labels. This
flexibility is necessary when maintaining different labels may
have different costs or when the ratio is pre-specified. We
present 1-bit labeling (two colors, namely black and white)
schemes for this problem along with a labeling algorithm for
generating the required labels. Given an $n$-node graph and a
rational number $\rho$, we can design a 1-bit labeling scheme such
that $n/b\geq \rho$ where $b$ is the number of nodes labeled
black. The robot uses $O(\rho\log\Delta )$ bits of memory for
exploring all graphs of maximum degree $\Delta$. The exploration
is completed in time
$O(n\Delta^{\frac{16\rho+7}{3}}/\rho+\Delta^{\frac{40\rho+10}{3}})$.
Moreover, our labeling scheme can work on graphs containing loops
and multiple edges, while that of Cohen et al. focuses on simple
graphs. }

\end{abstract}

\section{Introduction}

This paper concerns the task of graph exploration by a finite
automaton guided by a graph labeling scheme. A finite automaton
$\mathcal{R}$, called a robot, must be able to visit all the nodes
of any unknown anonymous undirected graph $G = (V,E)$. The robot
has no a priori information about the topology of $G$ and its
size. While visiting a node the robot can distinguish between the
edges that are incident on this node. At each node $v$ the edges
incident on it are ordered and labeled by consecutive integers $0,
\ldots, d-1$ called port numbers, where $d = \mathtt{deg}(v)$ is
the degree of $v$. We will refer to port ordering as a local
orientation. We use \emph{Mealy Automata} to model the robot. The
robot has a transition function $f$ and a finite number of states.
If the automaton in state $s$ knows the port $i$ through which it
enters a node of degree $d$, it switches to state $s'$ and exits
the node through port $i'$, that is, $f(s,i,d)=(s',i')$.

The graph exploration by mobile agents (robots) recently received
much attention, and different graph exploration scenarios have
been investigated. In the case of tree exploration, it is shown by
Diks et al.~\cite{DIKS04} that the exploration of $n$-node trees
such that the robot can stop once exploration is completed,
requires a robot with memory size $\Omega(\log\log\log n)$ bits,
and $\Omega(\log
 n)$ bits are necessary for exploration with return. Moreover, they constructed
an algorithm of exploration with return for all trees of size at
most $n$, using $O(\log^2 n)$ bits of memory. In the work of
Amb\"{u}hl et al.~\cite{Gas07}, the memory is lowered to $O(\log
n)$ bits for exploration with return. Flocchini et
al.~\cite{FIP10} later showed that a team of $\Omega(n)$
asynchronous oblivious robots are necessary for most $n$-node
trees, and that it is possible to explore the tree by $O(\log
n/\log\log n)$ robots only if the maximum degree of the tree is 3.

The memory size of the robot is widely adopted as the measurement
of the efficiency~\cite{Ko80,FI04,FIR05,FIP05,Re05}. Fraigniaud et
al.~\cite{FIP05} proved that a robot needs $\Theta(D\log \Delta)$
bits of memory to explore all graphs of diameter $D$ and maximum
degree $\Delta$. By the result of Reingold~\cite{Re05}, a robot
equipped with $O(\log n)$ bits of memory is able to explore all
$n$-node graphs in the perpetual exploration model, where the
return to the starting node is not required. The lower bound of
memory bits $\Omega(\log  n)$ is proved by Rollik~\cite{Ko80}.

In the scenario adopted in~\cite{ben02,FI04,FIR05}, the robot is
provided with a \emph{pebble} that can be dropped on a node and
used to identify the node later. The authors in~\cite{ben02}
showed that a robot can explore the graph with only one pebble if
it knows an upper bound on the number of nodes, otherwise
$\Theta(\log\log n)$ pebbles are necessary and sufficient.
Flocchini et~al. \cite{FMS09} studied a dynamic scenario where the
exploration is on a class of highly dynamic graphs. 

Recently, much research is focused on the exploration of anonymous
graphs guided by labeling the graph
nodes~\cite{KKP05,FIP08,Ilcinkas06,Ilcinkas08,GKM08,KM09,CDG09}.
The periodic graph exploration requires that the automaton has to
visit every node in an undirected graph infinitely many times in a
periodic manner. Ilcinkas~\cite{Ilcinkas06} considered minimizing
the length of the exploration period by appropriate assignment of
local port numbers. G\c{a}sieniec et al.~\cite{GKM08} improved the
upper bound of the exploration period $\pi$ from $4n - 2$ to
$3.75n - 2$ in an $n$-node graph, providing the agent with a
constant memory. For an oblivious agent, \cite{DJS05} achieved a
period of $10n$. Recently, Cyzyowicz et al.~\cite{CDG09} showed a
period of length at most $4 \frac 13 n$ for oblivious agents and a
period of length at most $3.5n$ for agents with constant memory.
Kosowski et al.~\cite{KM09} provided a new port labeling which
leads to shorter exploration cycles, improving the bound to
$\pi\leq 4 n - 2$ for oblivious agents.

Cohen et al.~\cite{Ilcinkas08} introduced the exploration labeling
schemes. The schemes consist of an algorithm $\mathcal{L}$ and a
robot $\mathcal{R}$ such that given any simple graph $G$ with any
port numbering, the algorithm $\mathcal{L}$ labels the nodes of
$G$, and $\mathcal{R}$ explores $G$ with the help of the labeling
produced by $\mathcal{L}$. It is shown that using only 2-bit
(actually, 3-valued) labels a robot with a constant memory is able
to explore all graphs, and the exploration is completed in time
$O(m)$ in any $m$-edge simple graph. The authors also presented a
1-bit labeling scheme (two kinds of labels, namely \textbf{black}
and \textbf{white}) on bounded degree graphs and an exploration
algorithm for the colored graph. The robot uses a memory of at
least $O(\log\Delta)$ bits to explore all simple graphs of maximum
degree $\Delta$. The robot stops once the exploration is
completed. The completion time of the exploration is
$O(\Delta^{O(1)}m)$.

\subsection{Our Results}
\if 0 In this paper, we consider a model where the system designer
is provided with a limited number of flags and is allowed to add
flags to graph nodes in the preprocessing stage; these flags can
guide the robot in the exploration of the graph. Maintaining of
flags may have costs (e.g., a lighting lamp), so it is necessary
to limit the number of flags.  Given an $n$-node graph $G$ and $b$
flags where $b \leq n/2$, we design a flagging scheme where the
number of flagged nodes is not greater than $b$. Following the
terms in~\cite{Ilcinkas08}, this scheme is a 1-bit labeling scheme
where each black node corresponds to a flagged node while each
white node corresponds to a node without a flag; the ratio of the
number of all the nodes to the number of black nodes is not less
than a given rational number $\rho=n/b$.\fi

We consider the problem of adjustable label guided graph
exploration. Since maintaining different labels may have different
costs, it is necessary to limit the number of some labels. For
example, in a 1-bit labeling scheme, if we use a lighting lamp to
represent `1' and a turned off lamp to represent `0', the number
of lighted lamps (label `1') may be limited to reduce the cost.
For a 1-bit labeling scheme on an $n$-node graph $G$ where the
number of nodes labeled black is $b$, we define
\emph{\textbf{$N$-ratio}} as the ratio of the number of nodes to
the number of nodes colored black, that is, $n/b$. Given a
rational number $\rho$, we can design a 1-bit labeling scheme on
$G$ such that the $N$-ratio is not less than $\rho$.

The 1-bit labeling scheme in~\cite{Ilcinkas08} does not guarantee
an arbitrary $N$-ratio and works specifically on simple graphs,
i.e., undirected graphs without loops or multiple edges. This
scheme employs the function of counting the number of neighbors
for a node, which is impossible in a non-simple graph with
multi-edges and loops. Using only the port numbering will not
allow a robot to know whether two neighbors of a node are the
same.

We present 1-bit labeling schemes that can adjust the $N$-ratio
and can work on non-simple graphs. We first investigate a family
of $N$-ratio tunable labeling schemes where the $N$-ratio can be
changed but not in a precise way. We classify the nodes in $G$ by
the distances between each node, and a specific node $r$ is
assigned as the \textbf{\emph{root}}. Each class of nodes in the
classification is called a \textbf{\emph{layer}}. In this family
of labeling schemes, all nodes in the same layer are labeled
similarly. We call $\rho'=bl/l$ the \emph{\textbf{$L$-ratio}} of
the labeling scheme where $l$ is the number of layers, and $bl$ is
the number of black layers. We introduce the $L$-ratio tunable
labeling schemes, enabling a robot to explore all graphs of
maximum degree $\Delta$. Starting from any node, the robot returns
to the root once the exploration is completed. We also design a
procedure for a robot to label the graph. But we need an extra
label to indicate that a node is not labeled yet.

Based on the $L$-ratio tunable labeling schemes, we introduce the
$N$-ratio adjustable labeling schemes. Precisely, given an
expected $N$-ratio $2\leq \rho\leq (D+1)/4$, we derive a series of
labelings from an $L$-ratio tunable labeling. Throughout the
paper, we use $\rho'$ to denote the $L$-ratio and $\rho$ to denote
the expected $N$-ratio. We prove that a labeling scheme with
$N$-ratio not less than $\rho$ can be found in these labeling
schemes. The exploration is completed in time
$O(n\Delta^{\frac{16\rho+7}{3}}/\rho+\Delta^{\frac{40\rho+10}{3}})$;
the robot need $O(\rho\log \Delta)$ bits of memory.

Table~\ref{table} compares our approach with the work of Cohen et
al.~\cite{Ilcinkas08}. In the case of $\rho=2$, our approach
extends the 1-bit labeling scheme in~\cite{Ilcinkas08} from simple
graphs to non-simple graphs. The exploration algorithms are
different, but their space and time complexities are similar for
simple graphs. When working on a simple graph labeled by the 1-bit
labeling scheme in~\cite{Ilcinkas08}, our exploration algorithm
runs in time $O(\Delta^{10}n)$ as in~\cite{Ilcinkas08}. Both
approaches derive a spanning tree from the graph by the labeling.
In~\cite{Ilcinkas08}, the tree contains all nodes; in our approach
the tree contains only black nodes, and the edges are paths of the
graph. To find a path of length $l$, the robot performs at most
$\Delta^{2l+2}$ traversals. Moreover, we use a new method to
identify the root and its neighbors for non-simple graphs.

When $\rho$ comes close to the diameter, the amount of memory used
by the robot is not far from that of the situation where all nodes
are white (that is, there is no labeling). It is known that
$\Omega(D \log \Delta)$~\cite{FIP05} bits of memory are necessary
without pre-labeling of the graph, which is the same bound as ours
when $\rho$ comes close to the diameter.

\begin{table*}[t]
\caption{Comparison of the labeling schemes in~\cite{Ilcinkas08}
and ours. The
first two rows are from~\cite{Ilcinkas08}. }%
\begin{center}
\small
\begin{tabular}{|c|c|c|c|c|}
\hline
Label size  & Robot's memory & Time & Ratio & Works on\\
(\#bits) & (\#bits) & (\#edge-traversals) & (\#black nodes) & \\
\hline $2$ & $O(1)$ & $O(m)$ & $-$ & simple graphs\\
\hline $1$ & $O(\log \Delta)$ & $O(\Delta^{O(1)}m)$&no
guarantee\footnotemark[3] & simple graphs\\
\hline 1&$O(\rho\log \Delta)$ & $O(n\Delta^{16\rho/3}/\rho+\Delta^{40\rho/3+1})$ & $ \leq n/\rho$ & non-simple graphs\\
\hline
\end{tabular}

\end{center}
\label{table}%
\end{table*}%
\footnotetext[3]{The number of black nodes in~\cite{Ilcinkas08}
can vary (without any control) from $\Theta(1)$ to $\Theta(n)$,
depending on the
  cases.}

\section{$L$-ratio Tunable 1-Bit Labeling Schemes for Bounded Degree Graphs  \label{S2}}

In this section, we describe an $L$-ratio tunable exploration
labeling scheme using 1-bit labels. Let $G$ be an $n$-node graph
 of degree bounded by $\Delta$. It is possible to color the
nodes of $G$ with two colors namely black and white, while the
$L$-ratio of the labeling is tunable. There exists a robot that
can explore the graph $G$ by the aid of the labeling, starting
from any node and terminating after identifying that the entire
graph has been traversed.

\subsection{Notions}

Let $v$ and $u$ be nodes connected by edge $e$. Denote by
$port(e,u)$ the port number of the port of $u$ which $e$ is
incident on. A path $P$ in a non-simple graph is defined as a
series of edges $e_0,e_1,\ldots,e_k$ such that for a series of
nodes $n_0,n_1, \ldots,n_{k+1}$, edge $e_i$ connects $n_i$ and
$n_{i+1}$ $(0\leq i\leq k)$. The string $p_0p_1\ldots p_{2k+1}$,
where $p_i=port(e_{\lfloor i/2\rfloor},n_{\lceil i/2\rceil})$
$(0\leq i\leq 2k+1)$, is called the \emph{label} of $P$. We denote
by $P^{-1}$ the reversal path of $P$. We say that a path $P$ is
greater than path $P'$, if the label of $P$ is lexicographically
greater than the label of $P'$. The distance between two nodes
$u$, $v$ is the number of edges in the shortest path from $u$ to
$v$, denoted by $d(u,v)$. Let $L_i$ denote the set of nodes that
are at distance $i$ from $r$, and $L_0=\{r\}$. For layers $L$ and
$L'$, we let $d(r,L)$ denote the distance between any node in
layer $L$ and $r$, $d(L,L')$ denote $|d(r,L)-d(r,L')|$.

\subsection{Labeling Schemes}

The following is a class of $L$-ratio tunable 1-bit labeling
schemes.

\noindent\emph{\textbf{Labeling $\mathcal{AL}$.}} Pick an
arbitrary node $r\in V$ and assign it the \textit{root} of
$\mathcal{AL}$. Label $r$ black. Select two different non-negative
integers $d_1$, $d_2$ satisfying $d_1\geq 2$ and $\lfloor
d_2/2\rfloor\geq d_1$. Define four classes of nodes $A,B,C$, and
$D$ as follows:
\begin{center}
\begin{flushleft}
$C=\{v\in V\mid d(r,v) \ \mathtt{ mod }\ (d_1+d_2+2)=0\}$,\\
$D=\{v\in V\mid d(r,v) \ \mathtt{ mod }\ (d_1+d_2+2)=1\}$, \\
$A=\{v\in V\mid d(r,v) \ \mathtt{ mod }\ (d_1+d_2+2)=d_2+1\}$,\\
$B=\{v\in V\mid d(r,v) \ \mathtt{ mod }\ (d_1+d_2+2)=d_1+d_2+1\}$.
\end{flushleft}
\end{center}
Label all the nodes in class $A,B,C$, and $D$ black and label all
the nodes left white. The $\mathcal{AL}$ labeling is denoted by
$\langle r,d_1,d_2 \rangle$.\vspace{4pt}

An example of $\mathcal{AL}$ labeling schemes is shown in
Figure~\ref{scheme}. A layer is called a white (black) layer if
all nodes in this layer are white (black). Denote black layers by
$\mathit{BL}_0, \mathit{BL}_1,\ldots, \mathit{BL}_{\mathit{D_B}}$,
where $\mathit{BL}_0=L_0$, $\mathit{D_B}+1$ is the number of black
layers, and $d(r,\mathit{BL}_i)<d(r,\mathit{BL}_j)$ if $i<j$. For
$X\in\{A,B,C,D\}$, layer $\mathit{BL}_i$ is said to be an
$X$-layer if $\mathit{BL}_i\subset X$. Two black layers are said
to be adjacent if one is $\mathit{BL}_i$ and another is
$\mathit{BL}_{i+1}$. The black nodes whose neighbors are all black
are called \textbf{B}-nodes.

\begin{figure}[htbp]
\centering
\begin{minipage}[t]{0.52\linewidth}
    \centering
    \includegraphics[width=2.9in]{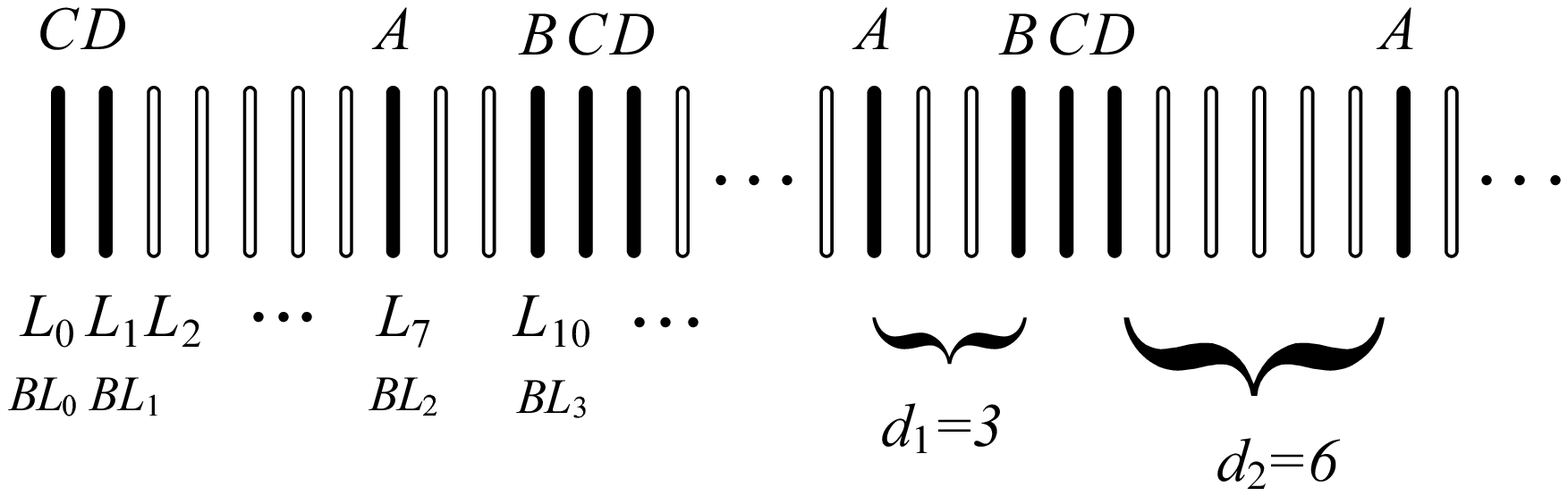}
    \caption{ An $\mathcal{AL}$ labeling scheme. Each line represents a layer. Black lines represent black layers, and white lines represent white layers.}%
    \label{scheme}
\end{minipage}
\hspace{13mm}
\begin{minipage}[t]{0.35\linewidth}
    \centering
    \includegraphics[width=1.5in]{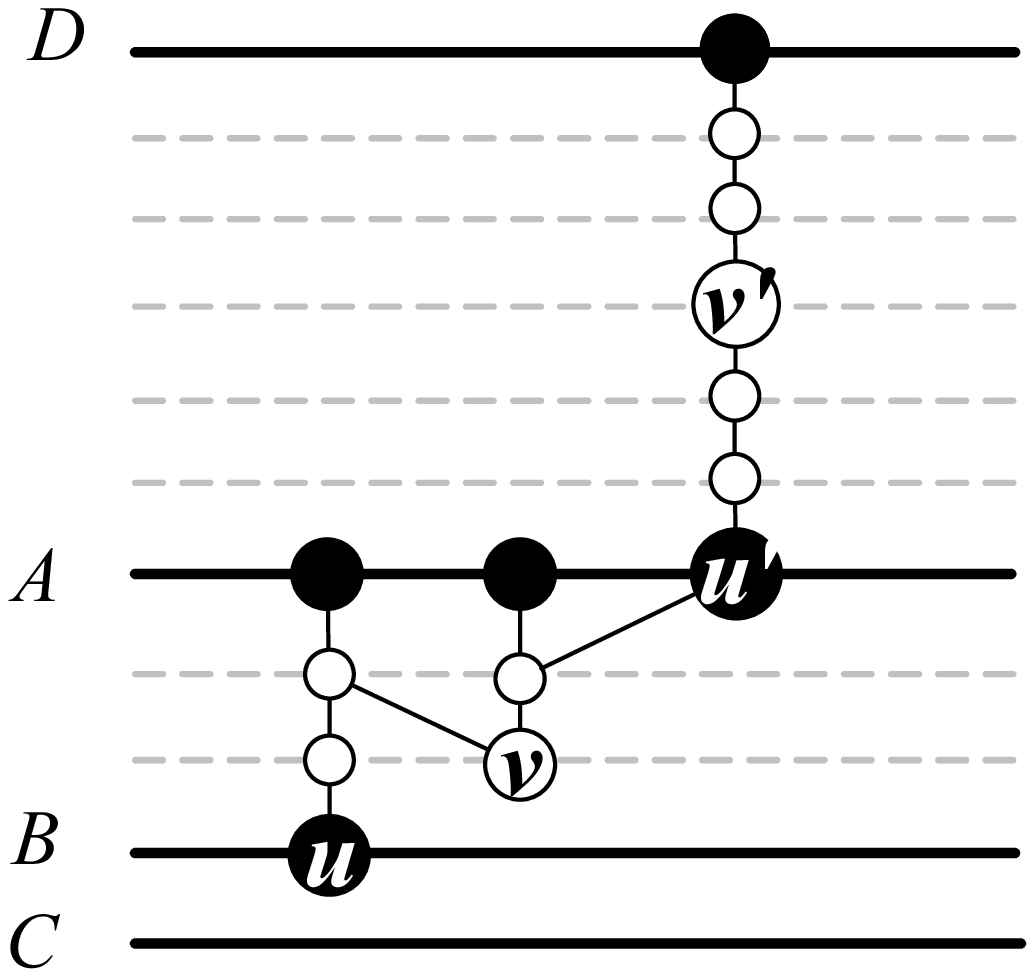}
    \caption{ $R_W(v)=1$, $R_W(v')=2$. By property 3, node $u$ and $u'$ can be distinguished by $R_W(v)$ and $R_W(v')$.}%
    \label{claim}
\end{minipage}

\end{figure}

The $L$-ratio of the labeling can be altered by adjusting $d_1$
and $d_2$, but it cannot be adjusted precisely to guarantee that
the $L$-ratio is not less than a given rational value. We assume
that $D\geq d_1+d_2+1$, that is, there are at least four black
layers. Then the upper bound on the $L$-ratio is $(D+1)/4$. The
minimal $L$-ratio is of an $\mathcal{AL}$ labeling where
$d_1=2,d_2=4$, and $D=9$, and there are six black layers in the
labeled graph. We have the $L$-ratio $\rho'\geq 5/3$.

For $\mathcal{AL}$ labeling schemes, we will prove the following
in the remaining of Section~\ref{S2}.

\begin{Theorem} Let $G$ be an $n$-node graph of degree bounded by
an integer $\Delta$, and let $G$ be labeled by an $\mathcal{AL}$
labeling scheme. There exists a robot that can explore the graph
$G$, starting from any given node and terminating at $r$ after
identifying that the entire graph has been traversed. The robot
has $O(\rho'\log\Delta)$ bits of memory, and the total number of
edge traversals by the robot is $O(\Delta^{12\rho'-9}n)+
o(\rho'\Delta n)$, where $\rho'$ is the $L$-ratio of the labeling.
\label{main}
\end{Theorem}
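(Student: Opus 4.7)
The plan is to split the robot's behaviour into two stages: an initial \emph{root-location} stage, and a main \emph{DFS} stage that treats the black nodes as vertices of a virtual spanning tree and realises its tree edges as short paths through the intervening white nodes. The $\mathcal{AL}$ scheme is tailored so that every decision the robot must make can be phrased in terms of the black/white colour it currently reads, the identity of the surrounding layer ($A$, $B$, $C$, or $D$), and the relative position of adjacent black layers. This will be the backbone of both stages.

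For root location, I would exploit the fact that the root $r$ is the unique $C$-node whose neighbourhood within radius $d_1+d_2$ is \emph{asymmetric}: all its black neighbours lie in the single $D$-layer $L_1$, and the next black layer in the ``outward'' direction ($A$) is at distance $d_2+1$, whereas every other $C$-node sits between a $B$-layer and a $D$-layer and therefore exhibits black neighbours in three consecutive black layers. Starting from an arbitrary node the robot first reaches any black layer by a short guided walk, then, in each successive black layer, uses the local $B$-$C$-$D$ pattern as a compass pointing toward $r$. Each period-step costs $\Delta^{O(\rho')}$ edge traversals (to enumerate port sequences of length at most $d_2+1=O(\rho')$ connecting consecutive black layers), and at most $O(D/\rho')$ such steps bring the robot to $r$; the total cost of this stage is absorbed into the $o(\rho'\Delta n)$ term.

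The DFS stage simulates a tree traversal on the subgraph induced by the black nodes, with a ``tree edge'' realised as a path of length at most $d_2+1=O(\rho')$ through white nodes. At each black node $v$ the robot reads the layer class and deduces its \emph{parent direction} (the unique adjacent black layer strictly closer to $r$); it then enumerates, in the canonical port order, all black-to-black outgoing paths of bounded length, visits each newly-discovered child recursively, and returns. Enumerating all port sequences of length $\le \ell$ costs $\Delta^{2\ell+2}$ traversals (as noted in the introduction), and with $\ell=O(\rho')$ this gives $\Delta^{O(\rho')}$ per tree edge and $O(\Delta^{12\rho'-9}n)$ overall. White nodes are visited as byproducts: every white node lies on some geodesic between two consecutive black layers, and the brute-force enumeration is guaranteed to walk every such geodesic at least once. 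The memory footprint is $O(1)$ for DFS control plus the currently enumerated port prefix of length $O(\rho')$ at $O(\log\Delta)$ bits each, giving the claimed $O(\rho'\log\Delta)$ bits; termination at $r$ follows because, once $r$ is recognised, exhausting its port list closes the recursion.

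The main obstacle, and the point where the argument genuinely departs from \cite{Ilcinkas08}, is handling loops and multiple edges. The robot cannot count neighbours, so both root identification and the DFS must rely only on distances and colours, never on deciding whether two ports at a node lead to the same physical vertex. The plan therefore hinges on showing that (i) the uniqueness of $r$'s local pattern survives in the presence of multi-edges (loops add only same-layer ``neighbours'' and do not perturb the three-layer versus one-layer dichotomy of $C$-nodes), and (ii) the DFS does not spuriously revisit a child through a parallel edge, which is handled by using a canonical port-labelled path as the identifier of each tree edge rather than the endpoint itself. Once these geometric invariants of $\mathcal{AL}$ are verified, the time and memory bounds follow by the straightforward accounting sketched above.
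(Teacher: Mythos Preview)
Your high-level plan---DFS on an implicit spanning tree of black nodes, with tree edges realised as bounded-length white paths, and memory dominated by one stored port sequence of length $O(\rho')$---matches the paper's architecture. But two load-bearing pieces are missing or incorrect.

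\textbf{Root identification.} Your claim that $r$ is locally recognisable as ``the unique $C$-node with an asymmetric neighbourhood'' does not survive the very class-detection machinery you need elsewhere. In the paper's framework, a $C$-node is recognised precisely by having a $B$-layer neighbour; since $r$ has none, the test that separates $C$ from $D$ (the procedure $\mathit{C\_or\_D}$) classifies $r$ as a $D$-node, and likewise every node of $\mathit{BL}_1$ fails the $D\to C$ parent search. The paper does \emph{not} distinguish $r$ by any local pattern. Instead it (i) detects membership in $\mathit{BL}_1\cup\{r\}$ by the failure of $\mathit{Get\_Par\_Path}$, (ii) during the DFS, remembers the single port leading back to $r$ whenever it steps off $r$, and (iii) when started at an arbitrary node, walks up to some $\mathit{BL}_1$ node $r'$ and then runs a full exploration rooted at \emph{each} of the at most $\Delta$ \textbf{B}-node neighbours of $r'$, accepting the factor-$\Delta$ blow-up. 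Your cost accounting for this stage (``absorbed into the $o(\rho'\Delta n)$ term'') is also off: each upward step already costs $\Delta^{\Theta(\rho')}$, not $O(\Delta)$.

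\textbf{Layer-class determination.} You write that ``at each black node $v$ the robot reads the layer class''. It cannot: the label is one bit. Determining whether a black node is in $A$, $B$, $C$, or $D$ is the technical core of the argument and is where the constraints $d_1\ge 2$ and $\lfloor d_2/2\rfloor\ge d_1$ are used. The paper builds explicit procedures $\mathit{Is\_B}$, $\mathit{A\_or\_B}$, $\mathit{C\_or\_D}$ from the white-radius Properties~2 and~3 (e.g.\ an $A$-node has a white-path to a node of white radius $\ge\lfloor d_2/2\rfloor-1$, a $B$-node does not). Without these, the robot has no compass at all. Relatedly, your DFS omits the step that makes the tree a tree: for a candidate successor $x$ reached from $u$ via path $P'$, one must verify that $P'^{-1}$ is the lexicographically \emph{smallest} path from $x$ back to $pred(x)$ (the paper's $\mathit{Check\_Par\_Path}$); otherwise $x$ is visited once per incoming short path, and in a multigraph this defeats the time bound.
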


For a black node $u$, we identify two subsets of nodes that can be
reached by a path from $u$. For $u\in \mathit{BL}_{i}$ $(0< i\leq
\mathit{D_B})$, $pred(u)$ is the set of nodes in
$\mathit{BL}_{i-1}$ such that for any $x\in pred(u)$,
$d(u,x)=d(\mathit{BL}_i,\mathit{BL}_{i-1})$. For $u\in
\mathit{BL}_{i}$ $(0\leq i< \mathit{D_B})$, $succ(u)$ is the set
of nodes in $\mathit{BL}_{i+1}$ such that for any $x\in succ(u)$,
$d(u,x)=d(\mathit{BL}_{i+1},\mathit{BL}_{i})$. For root $r$, we
set $pred(r) = \varnothing$, and we have $succ(r) =
\mathit{BL_1}$. For $u\in \mathit{BL}_{\mathit{D_B}}$,
$succ(u)=\varnothing$.

In the following, we derive an \textit{implicit spanning tree} of
black nodes rooted at $r$ from an $\mathcal{AL}$ labeling scheme.
For $u\in \mathit{BL}_{i}$ $(0\leq i< \mathit{D_B})$, denote by
$succ\_path(u)$ the set of paths of length
$d(\mathit{BL}_i,\mathit{BL}_{i+1})$ whose starting node is $u$
and ending node is in $\mathit{BL}_{i+1}$. For $u\in
\mathit{BL}_{\mathit{\mathit{D_B}}}$, $succ\_path(u)=\varnothing$.
For $u\in \mathit{BL}_{i}$ $(0< i\leq \mathit{\mathit{D_B}})$,
denote by $\mathit{pred\_path}(u)$ the set of paths of length
$d(\mathit{BL}_i,\mathit{BL}_{i-1})$ whose starting node is $u$
and ending node is in $\mathit{BL}_{i-1}$. The path in
$\mathit{pred\_path}(u)$ with the lexicographically smallest label
is called the \emph{parent path} of $u$, denoted by
$\mathit{par\_path}(u)$. We set
$\mathit{pred\_path}(r)=\varnothing$. The ending node of
$\mathit{par\_path}(u)$ is called the \emph{parent} of $u$,
denoted by $\mathit{parent}(u)$. The set of nodes whose parent is
$u$ is denoted by $child(u)$. We have $child(u)\subseteq succ(u)$
and $\mathit{parent}(u)\in pred(u)$. The reversal paths of the
parent paths of the nodes in $child(u)$ are called \emph{child
paths} of $u$. All black nodes, their parent paths, and their
child paths form an implicit spanning tree.

\subsection{Properties of $\mathcal{AL}$ Labeling Schemes}

In this section we describe three properties on $\mathcal{AL}$
labeling schemes. These properties are the basis of the
exploration algorithm. Since for any node $u$ there is a shortest
path from $u$ to $r$, we have the following property.

\newtheorem{Property}{Property}

\begin{Property}
Let $u\neq r$ be a node, and let $L_i$ be a black layer such that
$i<d(r,u)$. There exists at least a node $x\in L_i$ such that
$d(x,u)=d(r,u)-i$. \label{back}
\end{Property}

A useful corollary of Property~\ref{back} is that any class $D$
node has a $\textbf{B}$-node neighbor.

Assume that the nearest black nodes to some node $v$ are at
distance $\ell$. Then the \emph{white-radius} of $v$ is $\ell-1$,
denoted by $R_W(v)$. Property~\ref{maxR} gives the upper bound on
the white radius of white nodes between two adjacent black layers.
Figure~\ref{claim} gives an example.

\begin{Property}
Let $u$ be a white node, and let $d(r,\mathit{BL}_i)<d(r,u)<
d(r,\mathit{BL}_{i+1})$. We have $R_W(u)\leq
d(\mathit{BL}_i,\mathit{BL}_{i+1})-2$. \label{maxR}
\end{Property}

Let $P$ be a path from $u$ to $v$ of length $\ell$ where only $u$
and $v$ are allowed to be black. Path $P$ is called a
\emph{white-path} from $u$, or precisely, an
\emph{$\ell$-white-path}. Let $u\in \mathit{BL}_i$ $(i\neq 0)$,
and let $\ell=d(\mathit{BL}_i,\mathit{BL}_{i-1})$. According to
Property~\ref{back}, there is at least one $\ell$-white-path from
$u$ to a node in $\mathit{BL}_{i-1}$. The maximal white radius of
nodes in this path is $\lfloor \ell/2\rfloor-1$, which leads to
the following property.

\begin{Property}
Let $u\in \mathit{BL}_i$ $(i\neq 0)$, and let
$\ell=d(\mathit{BL}_i,\mathit{BL}_{i-1})$. There exists a white
path from $u$ that reaches a white node whose white radius is not
less than $\lfloor \ell/2\rfloor-1$. \label{maxR+}
\end{Property}

These properties are used in our exploration algorithm. For
example, we can distinguish between a class $A$ node and a class
$B$ node by applying these properties. For $u\in A$, there exists
a white node $x$ that can be reached by a white path from $u$ such
that $R_W(x)=\lfloor d_2/2\rfloor-1$. But for a class $B$ node
$u$, the maximal white radius of white nodes that can be reached
by a white path from $u$ is not greater than $d_1-2$. Since
$d_1\leq \lfloor d_2/2\rfloor$ (see the definition of the
$\mathcal{AL}$ labeling), $d_1-2$ is less than $\lfloor
d_2/2\rfloor-1$. Figure~\ref{claim} gives an illustration.

\subsection{The Local Search Procedure\label{ls}}

The following local search procedure can be used to visit all
nodes at distance not greater than a given radius from a
node.\vspace{6pt}

\noindent \textbf{Procedure} $\mathit{LocalSearch}(u,\ell,inport)$

\noindent\textbf{Input}: $u$ is the starting node, $\ell$ is the
radius, and $inport$ is the port from which $\mathcal{R}$ enters
$u$.
\begin{algorithmic}[1]

\STATE \textbf{if} \ \ {$\ell=0$} \ \ \textbf{then}
$report(u)$\footnote[4]{When the robot reports a node, it does not
exit from the procedure nor makes any movement.}

\STATE \textbf{else}

\STATE \ \ \ \textbf{for}{ $outport$ from 0 to $\mathtt{deg}(u)-1$
and $outport\neq inport$ } \textbf{do}

\STATE  \  \ \ \ \ \ $v\leftarrow$ the neighbor of $u$ which
$outport$ leads to

\STATE   \ \ \ \ \ \  $\mathcal{R}$ moves to $v$

\STATE   \ \ \ \ \ \  $inport'\leftarrow$ the port from which
$\mathcal{R}$ enters $v$

\STATE  \ \ \ \ \ \  $LocalSearch(v,\ell-1,inport')$

\STATE  \ \ \ \ \ \  $\mathcal{R}$ moves back to $u$

\STATE \textbf{return}
\end{algorithmic}

 By the call
$\mathit{LocalSearch}(u,\ell,-1)$, the robot explores all
neighbors of $u$ up to distance $\ell$. In the local search from
$u$ within radius $\ell$, there are at most
$\mathit{LS}(\ell)=2\Delta\Sigma_{i=0}^{\ell-1}(\Delta-1)^i=O(\Delta^{\ell})$
edge traversals, and at most $\Delta(\Delta-1)^{\ell-1}$ nodes are
reported. Note that an edge may be visited more than once, and a
node could be reported more than once. The robot is in node $u$
when the procedure terminates. We summarize the results on the
$\mathit{LocalSearch}$ procedure in the following lemma.

\begin{lemma}
In the local search from node $u$ within radius $\ell$, a robot
with $O(\ell\log\Delta)$ bits of memory visits all nodes at
distance not greater than $\ell$ from $u$ without visiting any
other node. There are at most $O(\Delta^{\ell})$ edge traversals
and at most $\Delta(\Delta-1)^{\ell-1}$ nodes being reported. The
robot is in node $u$ when the local search terminates. \label{LSP}
  \end{lemma}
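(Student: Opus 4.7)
The plan is to establish all four assertions of the lemma --- coverage of the ball of radius $\ell$ around $u$, absence of visits to farther nodes, the edge and report counts, and the memory bound --- by a single induction on $\ell$, with the coverage claim being the only substantive step.

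In the base case $\ell=0$, the procedure reports $u$ without moving, so every claim holds trivially. In the inductive step, the direction ``no node outside the distance-$\ell$ ball is ever visited'' is immediate: a call at radius $\ell$ moves at most one edge before recursing at radius $\ell-1$, and by induction the recursion touches only nodes within distance $\ell-1$ of the neighbour entered. Termination at the starting node follows similarly from line 8 combined with the inductive hypothesis that each recursive call returns the robot to its own entry node.

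The substantive point --- and the main obstacle in a graph with loops and parallel edges --- is to show that refusing the inport edge at every recursive layer does not prevent the robot from reaching a given $w$ at distance $k\leq\ell$ from $u$. The key observation is that any shortest walk from $u$ to $w$ is \emph{non-backtracking}: a configuration $z_{i-1}\to z_i\to z_{i-1}$ through the same edge would let us cut two steps, contradicting minimality. Hence, fixing a shortest walk $u=x_0,x_1,\ldots,x_k=w$, the port by which $x_1$ reaches $x_2$ is necessarily different from the inport received at $x_1$ from $u$, so the loop at $x_1$ does enter that outport. Applying the same observation inductively along $x_1,x_2,\ldots,x_k$ inside the nested calls shows that the entire walk is traced step by step, so $w$ is visited. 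The argument handles self-loops and multiple edges uniformly because it refers only to ports and to individual edges, never to neighbour identity.

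The remaining bounds are routine. Writing $T(\ell)$ for the worst-case number of edge traversals in a recursive subcall at radius $\ell$ (with some inport set), one has $T(\ell)\leq (\Delta-1)(2+T(\ell-1))$ and $T(0)=0$, while the top-level call contributes an extra factor $\Delta/(\Delta-1)$. Unrolling gives the geometric estimate $2\Delta\sum_{i=0}^{\ell-1}(\Delta-1)^i=O(\Delta^{\ell})$. Reports occur only at the leaves of the recursion tree (calls with $\ell=0$), and the same branching count bounds their number by $\Delta(\Delta-1)^{\ell-1}$. Finally, each active frame stores only the current values of $outport$ and $inport$, two integers in $[0,\Delta)$, so the recursion stack uses $O(\ell\log\Delta)$ bits in total.
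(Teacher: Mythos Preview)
Your proof is correct. The paper itself does not give a formal proof of this lemma: it simply states the formula $\mathit{LS}(\ell)=2\Delta\sum_{i=0}^{\ell-1}(\Delta-1)^i=O(\Delta^{\ell})$ and asserts the remaining claims as evident from the recursive structure of the procedure. Your write-up is therefore a genuine elaboration rather than a different route.

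The one place where you add real content beyond the paper is the coverage argument. The paper just says ``the robot explores all neighbors of $u$ up to distance $\ell$'' without explaining why excluding the inport at every recursive level cannot miss a node; your observation that a shortest walk is non-backtracking (same-edge reversals can be excised), and hence is traced port-by-port by the nested calls, is exactly the right justification and is stated carefully enough to cover self-loops and parallel edges. The recurrences you write for traversals and reports unroll to precisely the paper's closed form, and the memory bound is the obvious stack-depth count. Nothing is missing.
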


We can revise the procedure to explore only the paths that are
greater than a given path $P$ from $u$ as follows. The robot first
moves to the end of $P$ via $P$ and restores the context of the
procedure for $P$ in its memory and then starts the procedure.

\subsection{Exploration Guided by Labeling\label{exp}}

The overall exploration performed by the robot is a depth first
search (DFS) of the implicit spanning tree. All nodes will be
visited in the DFS. The robot maintains a state
$s\in\{\mathtt{up}, \mathtt{down}\}$. Initially, $\mathcal{R}$ is
at the root $r$ of an $\mathcal{AL}$ labeling and leaves $r$ by
the port numbered 0 in state $\mathtt{down}$. Assume that
$\mathcal{R}$ enters a black node $u$ via a path $P$ that belongs
to the implicit spanning tree. If $\mathcal{R}$ is in state
$\mathtt{down}$, it searches for the minimal child path of $u$. If
$\mathcal{R}$ is in state $\mathtt{up}$, it moves down to the
starting node of $P$ and searches for the minimal child path of
$u$ that is greater than $P^{-1}$. In both cases, if $\mathcal{R}$
does not find the desired path, $\mathcal{R}$ moves to
$\mathit{parent}(u)$ via the parent path of $u$ and transits the
state to $\mathtt{up}$; otherwise $\mathcal{R}$ moves to the end
node of the path found
 and transits the state
to $\mathtt{down}$. The correctness of these
  procedures will be proved later.

To know whether a path belongs to the spanning tree, we use the
following procedures.
\begin{enumerate}
\item $\mathit{Get\_Par\_Path}(u)$ identifies the parent path of
$u\notin\{r\}\cup \mathit{BL}_1$ and $\mathit{parent}(u)$. If
$v=\mathit{parent}(u)$ is found, the procedure returns $v$, and
$\mathcal{R}$ has moved to $v$ and recorded the parent path of $u$
in its memory; otherwise the procedure returns ``false". \item
$\mathit{Next\_Child\_Path}(u,P)$ identifies the minimal child
path from $u\neq r$ that is greater than $P$ where $P$ is a child
path of $u$ or $\varnothing$\footnotemark[5]. When such a child
path, say $P'$, is found, the procedure returns the end of $P'$,
and $\mathcal{R}$ has moved to the end of $P'$. If no path is
found, the robot goes back to $u$, and the procedure returns
``false". \footnotetext[5]{$P$ can be replaced by the label of $P$
as the initial node of $P$ is also input.}
\end{enumerate}

All these procedures use a revised local search procedure, namely
\emph{white local search}. Given a radius $d$, a node $u$, and a
path $P$ from $u$\footnotemark[5], the white local search
procedure enumerates all the $d$-white-paths from $u$ that are
greater than $P$. It returns ``true" if such path exists and
``false" otherwise. In both cases, the robot is in $u$ when the
procedure terminates. This procedure is derived from
$\mathit{LocalSearch}$, and the following line should be inserted
into $\mathit{LocalSearch}$ between line 2 and line 3.

\begin{algorithmic}[ ]

\STATE \textbf{if} {$u$ is black and $\ell\neq$ the initial radius
of the local search} \textbf{then}\ \textbf{return}

\end{algorithmic}

This procedure has the same property as Lemma~\ref{LSP}. The term
``local search" refers to the white local search procedure in the
remainder of the paper.

\subsubsection{Procedure $\mathit{Get\_Par\_Path}$ and $\mathit{Next\_Child\_Path}$\label{Proc}}

We first present procedures that will be used many times in the
exploration procedures.

\textbf{Procedure $\mathit{Is\_B}$}

The $\mathit{Is\_B}$ procedure takes as input a black node $x$
that belongs to class $B$, $C$, or $D$ and returns ``$B$" iff $x$
is in class $B$. The robot first checks whether $x$ is a
\textbf{B}-node. If it is, $\mathit{Is\_B}(x)$ returns
``\textbf{B}-node". If not, the robot performs a local search from
$x$ within radius $d_1$ (denote the local search by
$\mathit{LS}_1$). Once a black node $y$ that has no
\textbf{B}-node neighbor is reported, $\mathit{Is\_B}(x)$ returns
``$B$".
If no such black node $y$ is reported or no node is reported,
$\mathit{Is\_B}(x)$ returns ``$D$". In any case, $\mathcal{R}$ is
in node $x$ when the procedure returns.

\textbf{Procedure $\mathit{C\_or\_D}$}

The $\mathit{C\_or\_D}$ procedure takes as input a black node $x$
($x\neq r$) that belongs to class $C$ or $D$ and returns the class
in which $x$ is. If $x$ is not a \textbf{B}-node
$\mathit{C\_or\_D}(x)$ returns ``$D$". Otherwise the robot
performs a local search from $x$ within radius $1$ (denoted by
$\mathit{LS}_1$). For each black neighbor $y$ of $x$ reported,
perform $\mathit{Is\_B}(y)$. Once $\mathit{Is\_B}(y)$ returns
``$B$", $\mathit{C\_or\_D}(x)$ returns $``C"$. If for every $y$,
$\mathit{Is\_B}(y)$ does not return ``$B$", then
$\mathit{C\_or\_D}(x)$ returns $``D"$. In any case, $\mathcal{R}$
is in node $x$ when the procedure returns.

\textbf{Procedure $\mathit{A\_or\_B}$}

The $\mathit{A\_or\_B}$ procedure takes as input a black node $x$.
If $x$ belongs to
class $A$ or $B$, $\mathit{A\_or\_B}(x)$ returns the class to which $x$ 
belongs. The robot performs a local search (denoted by
$\mathit{LS}_1$) within radius $d_1$ from $x$.
$\mathit{A\_or\_B}(x)$ returns ``$A$" if in this local search a
white node is reported whose white radius is $d_1-1$ and returns
``$B$" otherwise. \if 0 If $x$ is in class $D$ and $x$ has
children in the spanning tree, $\mathit{A\_or\_B}(x)$ will return
``$A$".\fi
 In any case, $\mathcal{R}$ is in node $x$ when the
procedure returns. \vspace{6pt}

Now we present procedure $\mathit{Get\_Par\_Path}$ and
$\mathit{Next\_Child\_Path}$.

\textbf{Procedure $\mathit{Get\_Par\_Path}(u)$}

Assume that $\mathcal{R}$ starts from a node $u\in \mathit{BL}_i$
$(i\geq 2)$. $\mathcal{R}$ aims at identifying the parent path of
$u$ and moving to $\mathit{parent}(u)$. According to the class
that $u$ belongs to, we consider four cases. In the following,
``$X\rightarrow Y$" means that $\mathcal{R}$ is in an $X$-layer
node $u$ and tries to move to $\mathit{parent}(u)$ in the adjacent
$Y$-layer. In each case, the robot first calls procedure
$\mathit{Path\_Enumeration}$ (PE for short) and then calls
procedure $\mathit{Node\_Checking}$ (NC for short) for each path
enumerated by PE. The functions of these two procedures are: (1)
PE: Enumerating (reporting) a set of white paths comprising
$\mathit{pred\_path}(u)$ and their ends. (2)~NC: Checking whether
a path enumerated by PE is in $\mathit{pred\_path}(u)$. Since the
local search enumerates paths in lexicographic order, in the
following cases, the node in $pred(u)$ firstly found by NC is
$\mathit{parent}(u)$, and the path recorded by the robot is the
parent path of $u$. Figure~\ref{B2A} gives an illustration.

\begin{figure*}[ht]%
   \hbox to\textwidth{\hfil\includegraphics[width=\textwidth]{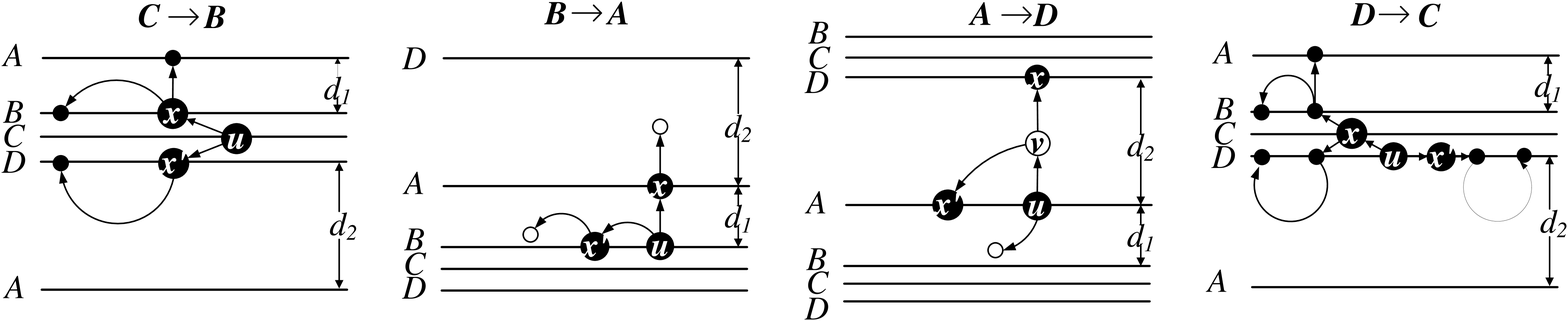}\hfil}%

   \caption{ Four cases in $\mathit{Get\_Par\_Path}$.
   The automaton starts from node $u$. It can reach node $x$ and $x'$ by PE. Node $x$ is in $pred$ set of node $u$ while $x'$ is not.  }%
   \label{B2A}%
   \hbox to\textwidth{\hfil\includegraphics[width=\textwidth]{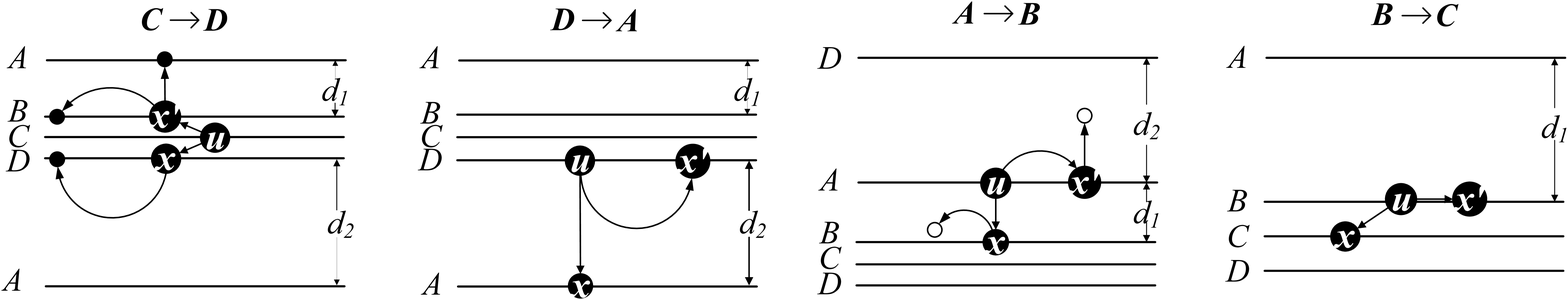}\hfil}%
   \caption{Four cases in $\mathit{Next\_Child\_Path}$.
   The automaton starts from node $u$. It can reach node $x$ and $x'$ by PE. Node $x$ is in $succ$ set of node $u$ while $x'$ is not.  }%
   \label{A2xB}%
\end{figure*}%

\noindent \textbf{ Case(1)} $C\rightarrow B$

\noindent PE: Perform a local search from $u$ within radius 1
($\mathit{LS}_1$).

\noindent NC: For each black node $x$ reported by PE, call
$\mathit{Is\_B}(x)$. Once $\mathit{Is\_B}(x)$ returns``$B$", we
return $x$.

\noindent \textbf{ Case(2)} $B\rightarrow A$

\noindent PE: Perform a local search from $u$ within radius $d_1$
($\mathit{LS}_1$).

\noindent NC: For each black node $x$ reported by PE, call
$\mathit{A\_or\_B}(x)$. Once $\mathit{A\_or\_B}(x)$ returns
``$A$", we return $x$.

\noindent \textbf{ Case(3)} $A\rightarrow D$

\noindent PE: (i) Perform a local search from $u$ within radius
$d_1$ ($\mathit{LS}_1$). (ii) From each white node $v$ reported,
perform a local search within radius $d_1-1$ ($\mathit{LS}_2$).
(iii) If all nodes visited in $\mathit{LS}_2$ are not black,
perform local search within radius $d_2-d_1$ from $v$
($\mathit{LS}_3$).

\noindent NC: For each black node $x$ reported by PE, if $x$ has a
\textbf{B}-node neighbor, we return $x$.

\noindent \textbf{ Case(4)} $D\rightarrow C$

\noindent PE: Perform a local search from $u$ within radius 1
($\mathit{LS}_1$).

\noindent NC: For each black node $x$ reported by PE, call
$\mathit{C\_or\_D}(x)$. Once $\mathit{C\_or\_D}(x)$ returns
``$C$", we return $x$.

In the above cases, if $x$ is returned by NC, then $x$ is
$\mathit{parent}(u)$, the path recorded in $\mathcal{R}$ is the
parent path of $u$, and the robot has moved to
$\mathit{parent}(u)$; otherwise we go back to PE to enumerate
another $x$ for NC.
 \vspace{4pt}

\noindent\textit{Identification of the Root and $\mathit{BL}_1$
Nodes.\label{IDr}} In $D\rightarrow C$, we distinguish a $C$-layer
node from a $D$-layer node by checking whether the node has a
neighbor in a $B$-layer. Since the nodes in
$\mathit{BL}_1\cup\{r\}$ have no ancestor in any $B$-layer, for
$u\in \mathit{BL}_1\cup\{r\}$, $D\rightarrow C$ in
$\mathit{Get\_Par\_Path}(u)$ will fail to find the parent of $u$.
For any other nodes, $\mathit{Get\_Par\_Path}$ will succeed in
finding their parents. Thus if $\mathit{Get\_Par\_Path}(u)$ fails,
then $u$ is in $\mathit{BL}_1\cup\{r\}$. The next problem is how
the robot identifies the root. The solution is that when leaving
the root, the robot memorizes the ports in the arrived nodes by
which it should return to the root. We revise
$\mathit{Get\_Par\_Path}(u)$ as follows. If $D\rightarrow C$ fails
to find $\mathit{parent}(u)$, we have $u\in
\mathit{BL}_1\cup\{r\}$; $\mathcal{R}$ goes to $r$ through the
port it memorized, and the procedure returns $r$.

 \vspace{6pt}

\textbf{Procedure $\mathit{Next\_Child\_Path}(u,P)$}

For a node $u\in \mathit{BL}_i$ $(i\geq 0)$, and a path $P$ from
$u$, the procedure identifies the minimal child path of $u$
greater than $P$. The robot calls the $\mathit{Enumerating}$
procedure to enumerate some paths from $u$ greater than $P$ and
calls the $\mathit{Identifying}$ procedure to check whether an
enumerated path is a child path. If such a path is found, we
return its end node; otherwise we return ``false", and the robot
backtracks to $u$, that is, $P$ is the maximal child path of $u$.

\if 0 For a node $u\in \mathit{BL}_i$ $(i\geq 0)$, and a path $P$
from $u$, the procedure identifies the minimal child path of $u$
that is greater than $P$. The robot first calls procedure
$\mathit{Enumerating}$, then calls procedure
$\mathit{Identifying}$ for each path enumerated by
$\mathit{Enumerating}$. The functions of the two procedures are:
(1) $\mathit{Enumerating}$: (i) PE: Using a local search to
enumerate all $d$-white-paths from $u$ that are greater than $P$,
where $d=d(\mathit{BL}_{i+1},\mathit{BL}_{i})$. (ii) NC: Checking
whether a node enumerated by PE is in $succ(u)$. (2)
$\mathit{Identifying}$: Check whether the path leading to a node
is a child path. If so, we return the node. If all children of $u$
are visited or $u$ has no child in the spanning tree, we return
``false", and the robot backtracks to $u$. \fi

\vspace{4pt} \noindent$\mathit{Enumerating}$. The procedure
contains two parts: (i) PE: Use a local search to enumerate all
$d$-white-paths $P'$ from $u$ that are greater than $P$, where
$d=d(\mathit{BL}_{i},\mathit{BL}_{i+1})$. If $P'$ does not exist,
return ``false". (ii) NC: Check whether the end node of $P'$ is in
$succ(u)$, if so, return this node. We consider the following
cases. Figure~\ref{A2xB} gives an illustration.

\noindent\textbf{ Case(1)}. $C\rightarrow D$

\noindent PE: Perform a local search from $u$ within radius 1
starting from $P$ ($\mathit{LS}_1$).

\noindent NC: For each black node $x$ reported by PE, call
$\mathit{Is\_B}(x)$. If ``$D$" is returned, we return $x$. If
``\textbf{B}-node" is returned, we call $\mathit{C\_or\_D}(x)$; if
``$D$" is returned, we return $x$.

\noindent\textbf{ Case(2)}. $D \rightarrow A$

\noindent PE: Perform a local search from $u$ within radius $d_2$
starting from $P$ ($\mathit{LS}_1$).

\noindent NC: For each black node $x$ reported by PE, if $x$ has
no \textbf{B}-node neighbor, we return $x$.

\noindent\textbf{ Case(3)}. $A\rightarrow B$

\noindent PE: Perform a local search from $u$ within radius $d_1$
 starting from
$P$ ($\mathit{LS}_1$).

\noindent NC: For each black node $x$ reported by PE, call
$\mathit{A\_or\_B}(x)$. Once ``$B$" is returned, we return $x$.

\noindent\textbf{ Case(4)}. $B\rightarrow C$

\noindent PE: Perform local search from $u$ within radius 1
starting from $P$ ($\mathit{LS}_1$).

\noindent NC: The black nodes without any white neighbor reported
by PE are in $succ(u)$
 ($\mathit{LS}_2$). We return the first such node.

In the above cases, if $x$ is returned by NC, then $x$ is in
$succ(u)$; otherwise we check another $x$ reported by PE.
 \vspace{4pt}

\noindent$\mathit{Identifying}$. When $\mathit{Enumerating}$ has
found a shortest path $P'$ from $u$ to a node $x$ in $succ(u)$,
$\mathcal{R}$ has moved to $x$ and recorded $P'$ in its memory.
$\mathcal{R}$ then checks whether this path is a child path of
$u$. If it is, the parent path of $x$ should be $P'^{-1}$. We use
\textbf{$\mathit{Check\_Par\_Path}(x,P'^{-1})$} to verify it. The
$\mathit{Check\_Par\_Path}$ procedure is similar to the
$\mathit{Get\_Par\_Path}$ procedure except that the former's PE
part is performed in decreasing lexicographic order. If
$\mathit{Check\_Par\_Path}(x,P'^{-1})$ finds a node in $pred(x)$,
then $P'^{-1}$ is not the parent path of $x$ and
$\mathit{Check\_Par\_Path}$ returns ``false"; otherwise $P'^{-1}$
is the parent path of $x$, and $\mathit{Check\_Par\_Path}$ returns
``true". In both cases, $\mathcal{R}$ is in node $x$ when
$\mathit{Check\_Par\_Path}$ terminates. If $P'$ is not a child
path of $u$, we go back to $\mathit{Enumerating}$ to enumerate
another path. If a child path of $u$ is found, then
$\mathit{Next\_Child\_Path}(u,P)$ returns the end node of $P$;
otherwise returns ``false". \vspace{6pt}

\textbf{Exploration from an Arbitrary Node}

When starting from an arbitrary node $x$, the robot should first
find the root. If $x$ is a white node, the robot performs a normal
local search within radius $d_2-1$ from $x$ and stops when
reaching a black node $u$ ($u$ is not a \textbf{B}-node). If $x$
is a \textbf{B}-node, the robot performs a normal local search
from $x$ within radius $2$ and stops when reaching a black node
that is not a \textbf{B}-node. A \textbf{B}-node is either in
class $C$ or in class $D$. For a \textbf{B}-node in class $C$, a
non-\textbf{B}-node black node will be reached by a local search
within radius 1; for a \textbf{B}-node in class $D$, such node
will be reached by a local search with radius 2. Therefore, in all
cases the robot can reach a black node $u$ that is not a
\textbf{B}-node. The robot then identifies in which class $u$ is.
If $u$ has a \textbf{B}-node neighbor, the robot performs
$\mathit{Is\_B}(u)$. If ``$B$" is returned, then $u\in B$. If
``$D$" is returned, then $u\in D$. (Note that $\mathit{Is\_B}(u)$
cannot answer ``\textbf{B}-node".) If $u$ does not have a
\textbf{B}-node neighbor, the robot calls $\mathit{A\_or\_B}(u)$.
We have $u\in A$ if ``$A$" is returned and $u\in B$ if ``$B$" is
returned.

After knowing the class of the starting node, the robot calls
procedure $\mathit{Get\_Par\_Path}$ all the way to find the root.
But our exploration cannot identify $r$ without memorizing the
port returning $r$. Fortunately, the robot knows whether it is in
a $\mathit{BL}_1$ node from the previous section. Let $r'$ be the
first $\mathit{BL}_1$ node found by the robot. Then $r$ is one of
the \textbf{B}-node neighbors of $r'$. We use every
\textbf{B}-node neighbor of $r'$ as a root and perform
explorations from them. The robot should memorize the port by
which it will return to $r'$. At least, the exploration rooted at
$r$ will be performed that visits all the nodes in $G$. The number
of edge traversals in this case is at most $\Delta$ times as large
as that in exploration from $r$.

\subsection{Correctness of the Exploration}

\begin{lemma}
For a black node $x$ that belongs to class $B$, $C$, or $D$,
$\mathit{Is\_B}(x)$ returns ``$B$" iff $x$ is in class $B$ and
returns ``$D$" iff $x$ is in $D$ and not a \textbf{B}-node. The
robot is in node $x$ when $\mathit{Is\_B}(x)$ exits. In the call
$\mathit{Is\_B}(x)$, the robot needs $O(d_1\log\Delta)$ bits of
memory, and the total number of edge traversals of the robot is
$O(\Delta^{d_1+2})$. \label{isb}
\end{lemma}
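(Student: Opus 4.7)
The plan is to split by the class of $x$, analyze the initial \textbf{B}-node test and the subsequent white local search separately, and tally memory and edge traversals.

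For $x\in C$ the procedure exits at step~1 with ``\textbf{B}-node'' because every neighbor of a $C$-layer node lies in a $B$-, $C$-, or $D$-layer, all black; the same happens if $x\in D$ is itself a \textbf{B}-node. In both cases the biconditional in the statement is vacuously satisfied. For $x\in B$, I would apply Property~\ref{back} to the preceding $A$-layer to get $y$ with $d(x,y)=d_1$ and verify that the shortest path $x=v_0,\dots,v_{d_1}=y$ is white: each step drops one BFS layer, so the intermediate layers have residues $d_1+d_2,\,d_1+d_2-1,\dots,d_2+2$ modulo $d_1+d_2+2$, all strictly between the black residues $d_2+1$ and $d_1+d_2+1$ (using $d_1\ge 2$). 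The white local search therefore reports $y$; all neighbors of an $A$-layer node lie in white layers or in the same $A$-layer, so $y$ has no \textbf{B}-node neighbor and the procedure returns ``$B$''.

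The critical case is $x\in D$ not a \textbf{B}-node, where I must show no reported node witnesses class $B$. The key claim is that any black node reached from $x$ by a white path of length $d_1$ lies in $x$'s own $D$-layer. Since BFS edges cross at most one layer and the preceding $C$-layer is entirely black, a white path from $x$ cannot pass through it nor end at one of its nodes (each $C$-layer node has only $B$-, $C$-, or $D$-layer neighbors, all black, so the predecessor in such a path would have to be black); the following $A$-layer is at distance $d_2>d_1$ and the next $B$- and $C$-layers are farther still. Hence any such white path stays in the white strip just above the $D$-layer and can only re-enter that $D$-layer. Every $D$-layer node has a $C$-layer BFS predecessor, and $C$-layer nodes are \textbf{B}-nodes, so every reported node has a \textbf{B}-node neighbor and the procedure correctly returns ``$D$''.

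For complexity I bound three phases. The initial \textbf{B}-node test at $x$ visits $x$'s $O(\Delta)$ neighbors to read their colors, for $O(\Delta)$ traversals and $O(\log\Delta)$ memory. The white local search within radius $d_1$ costs $O(\Delta^{d_1})$ traversals and $O(d_1\log\Delta)$ memory by the white analogue of Lemma~\ref{LSP}. Each of at most $\Delta(\Delta-1)^{d_1-1}$ reports triggers a test ``does $y$ have a \textbf{B}-node neighbor?'' that enumerates $y$'s $\Delta$ neighbors and, for each, enumerates its $\Delta$ neighbors to find a white one: $O(\Delta^2)$ per report and $O(\Delta^{d_1+2})$ in total. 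The robot ends at $x$ by the local search's own termination property on the ``$D$''-branch and by retracing the recursion-stack path in the early ``$B$''-return branch. The only delicate step is the residue-and-layer bookkeeping that rules out a $d_1$-white-path from $x\in D$ reaching any black layer other than $D$; everything else is routine.
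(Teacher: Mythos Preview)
Your proof is correct and follows essentially the same case split and reasoning as the paper: the \textbf{B}-node test handles $C$ (and the \textbf{B}-node subcase of $D$), Property~\ref{back} supplies the $A$-layer witness when $x\in B$, and for $x\in D$ non-\textbf{B}-node the distance bound $d_2>d_1$ together with the adjacent black $C$-layer forces every reported black endpoint to sit in the same $D$-layer and hence to carry a \textbf{B}-node neighbor. Your complexity accounting (three phases summing to $O(\Delta^{d_1+2})$) is equivalent to the paper's shortcut of bounding everything by a single radius-$(d_1{+}2)$ local search; you simply spell out more of the layer-residue bookkeeping that the paper leaves implicit.
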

\begin{proof}
If $x\in B$, $\mathit{LS}_1$ will report at least one node in
class $A$ according to Property~\ref{back}. Since any node in
class $A$ has no \textbf{B}-node neighbor and has a white
neighbor, $\mathit{LS}_1$ will report at least one such node.
Therefore, if $x$ is a node in class $B$, $\mathit{Is\_B}(x)$
returns ``$B$".

If $x$ is a \textbf{B}-node, $\mathit{Is\_B}(x)$ returns
``\textbf{B}-node". It can be easily verified that $x$ is in class
$C$ or in class $D$. Let $x$ be a class $D$ node and not a
\textbf{B}-node. We have that either $\mathit{LS}_1$ does not
report any node or any node reported by $\mathit{LS}_1$ belongs to
class $D$. Since $d_2> d_1$, $\mathit{LS}_1$ will not reach any
class $A$ node. By Property~\ref{back}, any node in class $D$ has
at least a neighbor in class $C$ (that is a \textbf{B}-node), then
any black node reported by $\mathit{LS}_1$ has a \textbf{B}-node
neighbor, and thus $\mathit{Is\_B}(x)$ returns ``$D$". Therefore,
for a node $x$ that belongs to class $B$, $C$, or $D$,
$\mathit{Is\_B}(x)$ returns ``$B$" iff $x$ is in class $B$ and
``$D$" iff $x$ is in $D$ and not a \textbf{B}-node.

By definition, $\mathcal{R}$ is in node $x$ when
$\mathit{Is\_B}(x)$ exits. The number of edge traversals of
$\mathit{Is\_B}(x)$ is not greater than that of a local search
from $x$ within radius $d_1+2$. By Lemma~\ref{LSP}, there are at
most $O(\Delta^{d_1+2})$ edge traversals in the call
$\mathit{Is\_B}(x)$, and the memory space of $\mathcal{R}$ is
$O(d_1\log\Delta)$ bits.

\end{proof}

\begin{lemma}
For a black node $x\neq r$ that belongs to class $C$ or $D$,
$\mathit{C\_or\_D}(x)$ returns ``$C$" iff $x\in C$ and ``$D$" iff
$x\in D$. The robot is in node $x$ when $\mathit{C\_or\_D}$ exits.
 In the call
$\mathit{C\_or\_D}(x)$, the robot needs $O(d_1\log\Delta)$ bits of
memory, and the total number of edge traversals of the robot is
$O(\Delta^{d_1+3})$. \label{cord}
\end{lemma}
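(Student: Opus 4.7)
The plan is to combine a short congruence calculation about distances modulo $d_1+d_2+2$ with Property~\ref{back} and the already-established Lemma~\ref{isb}. Throughout I would use the following structural observation: for any node $x$, each neighbor of $x$ has distance $d(r,x)-1$, $d(r,x)$, or $d(r,x)+1$ from $r$. Applied to $x \in C$ with $x \neq r$ (so $d(r,x) \equiv 0$), the neighbor residues modulo $d_1+d_2+2$ are $d_1+d_2+1$, $0$, and $1$, which are exactly the classes $B$, $C$, $D$; in particular all neighbors are black, so every such $x$ is automatically a \textbf{B}-node. For $x \in D$ (residue $1$) the possible neighbor residues are $0$, $1$, and $2$, and since $d_2 \geq 2d_1 \geq 4$, residue $2$ matches none of the four black classes, so the black neighbors of a class-$D$ node lie in $C \cup D$.

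This observation immediately justifies the first line of $\mathit{C\_or\_D}$: if $x$ is not a \textbf{B}-node, then $x \notin C$, so $x \in D$ and returning ``$D$'' is correct. From here on assume $x$ is a \textbf{B}-node, so the radius-$1$ local search $\mathit{LS}_1$ enumerates exactly its (black) neighbors, each of which lies in $B \cup C \cup D$ and is therefore a valid input for $\mathit{Is\_B}$ by Lemma~\ref{isb}. If $x \in C$, Property~\ref{back} applied to the adjacent $B$-layer yields a neighbor $y \in B$; Lemma~\ref{isb} then gives $\mathit{Is\_B}(y)=$``$B$'', so the procedure returns ``$C$''. If $x \in D$, the structural observation forbids any neighbor from lying in $B$, so by Lemma~\ref{isb} no call $\mathit{Is\_B}(y)$ ever returns ``$B$'' and the procedure returns ``$D$''.

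For the final position of the robot, $\mathit{LS}_1$ leaves it at $x$ when it completes, and each $\mathit{Is\_B}(y)$ leaves it at $y$ by Lemma~\ref{isb}; in the early-exit case returning ``$C$'', a single backward hop along the memorized entry port of $\mathit{LS}_1$ restores the robot to $x$ at constant additional cost. For the complexity bounds, $\mathit{LS}_1$ contributes $O(\Delta)$ edge traversals and constant memory, while each of the at most $\Delta$ invocations of $\mathit{Is\_B}$ contributes $O(\Delta^{d_1+2})$ traversals and uses $O(d_1\log\Delta)$ bits that may be reused across invocations; summing yields $O(\Delta^{d_1+3})$ edge traversals overall and peak memory $O(d_1\log\Delta)$ bits. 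The only real obstacle is bookkeeping: one has to check that $\mathit{Is\_B}$ is always called on nodes of $B\cup C\cup D$ (which the structural observation guarantees), and that the auxiliary return value ``\textbf{B}-node'' of $\mathit{Is\_B}$, being distinct from ``$B$'', does not accidentally cause the wrong branch to fire. Both issues are settled by the residue calculation in the first paragraph.
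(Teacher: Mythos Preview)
Your argument is correct and follows essentially the same route as the paper: use Property~\ref{back} to guarantee a class-$B$ neighbor when $x\in C\setminus\{r\}$, use the residue/layer structure to rule out class-$B$ neighbors when $x\in D$, and invoke Lemma~\ref{isb} on each neighbor. Your treatment is in fact more explicit than the paper's---you spell out the residue calculation, justify the initial ``not a \textbf{B}-node'' branch, and verify that every $\mathit{Is\_B}$ call receives an input in $B\cup C\cup D$---but the underlying decomposition and the complexity accounting (bounding the work by a depth-$(d_1{+}3)$ local search) are the same.
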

\begin{proof} If $x\in C\setminus\{r\}$, then $x$ has a neighbor $y$ in
class $B$, and thus $\mathit{Is\_B}(y)$ returns ``$B$" by
Lemma~\ref{isb}. Therefore, $\mathit{C\_or\_D}(x)$ returns ``$C$".
If $x\in D$, then all neighbors of $x$ belong to class $C$ or $D$.
Thus, for any neighbor $y$ of $x$, $\mathit{Is\_B}(y)$ does not
return ``$B$" by Lemma~\ref{isb}. Therefore,
$\mathit{C\_or\_D}(x)$ returns ``$D$".

By definition, $\mathcal{R}$ is in node $x$ when
$\mathit{C\_or\_D}(x)$ exits. The number of edge traversals of
$\mathit{C\_or\_D}(x)$ is not greater than that of a local search
from $x$ within radius $d_1+3$. By Lemma~\ref{LSP}, there are at
most $O(\Delta^{d_1+3})$ edge traversals in the call
$\mathit{C\_or\_D}(x)$, and the memory space of $\mathcal{R}$ is
$O(d_1\log\Delta)$ bits.
\end{proof}

\begin{lemma}
For a black node $x$ that belongs to class $A$ or $B$,
$\mathit{A\_or\_B}(x)$ returns ``$B$" if $x$ is in class $B$, and
returns ``$A$" otherwise. The robot is in $x$ when
$\mathit{A\_or\_B}$ exits. In the call $\mathit{A\_or\_B}(x)$, the
robot needs $O(d_1\log\Delta)$ bits of memory, and the total
number of edge traversals of the robot is $O(\Delta^{2d_1-1})$.
\if 0 If $x\in D$ and $child(x)\neq\varnothing$, then
$\mathit{A\_or\_B}(x)$ returns ``$A$".\fi \label{aorb}
\end{lemma}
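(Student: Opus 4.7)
The plan is to prove correctness, the return-to-$x$ guarantee, and the stated resource bounds by examining how the local search centered at $x$ interacts with the $\mathcal{AL}$ layer structure, leaning on the inequalities $d_1\ge 2$ and $\lfloor d_2/2\rfloor\ge d_1$ together with Lemma~\ref{LSP} and Property~\ref{back}. I view the test ``$R_W(v)=d_1-1$'' as implemented by a nested local search of radius $d_1-1$ at each white node $v$ reported by the outer search of radius $d_1$, which accounts for the $O(\Delta^{2d_1-1})$ traversal count.

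For the direction $x\in A$, I would produce an explicit witness $v$ within distance $d_1$ of $x$ with $R_W(v)=d_1-1$. Writing $x\in L_I$ with $I\equiv d_2+1\pmod{d_1+d_2+2}$, the previous black layer (class $D$) sits at $L_{I-d_2}$, and Property~\ref{back} yields a shortest path $x=v_0,v_1,\ldots,v_{d_2}$ with $v_k\in L_{I-k}$ (each step of a shortest path to a lower layer must decrement the layer index). Take $v=v_{d_1}\in L_{I-d_1}$. Checking the residue $d_2+1-d_1$ against $\{0,1,d_2+1,d_1+d_2+1\}$ shows $L_{I-d_1}$ is a white layer, so $v$ is white. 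By the layer-difference bound on shortest-path distance, every node of $L_I$ lies at distance $\ge d_1$ from $v$, every node of $L_{I-d_2}$ lies at distance $\ge d_2-d_1\ge d_1$, and all other black layers are farther still; hence the nearest black node to $v$ is at distance $\ge d_1$. Since $v$ reaches the black node $x$ in $d_1$ steps, this distance equals $d_1$ and $R_W(v)=d_1-1$. Because $d(x,v)=d_1$, the outer search reports $v$, and the inner search of radius $d_1-1$ from $v$ verifies that the $(d_1{-}1)$-ball around $v$ is all white, so the procedure returns ``$A$''.

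For the direction $x\in B$, I would show that no white node reachable within $d_1$ of $x$ has $R_W$ equal to $d_1-1$. Writing $x\in L_I$ with $I\equiv d_1+d_2+1$, the black layers within $d_1$ of $L_I$ are $L_{I-d_1}$ ($A$), $L_{I+1}$ ($C$), and $L_{I+2}$ ($D$); the inequality $d_1\le\lfloor d_2/2\rfloor$ prevents any other black layer from coming within $d_1$. A reported white node $v$ lies in some $L_j$ with $j\in\{I-d_1+1,\ldots,I-1\}\cup\{I+3,\ldots,I+d_1\}$. In the first range, a BFS path from $v$ reaches $L_{I-d_1}$ in $j-(I-d_1)\le d_1-1$ steps, so $R_W(v)\le d_1-2$. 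In the second range, a BFS path from $v$ reaches $L_{I+2}$ in $(j-I)-2\le d_1-2$ steps, so $R_W(v)\le d_1-3$. Either way $R_W(v)<d_1-1$, so the procedure outputs ``$B$''.

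The remaining assertions follow quickly: Lemma~\ref{LSP} guarantees that each local search returns the robot to its starting node, so nesting still leaves the robot at $x$ on exit. Memory is bounded by $O(d_1\log\Delta)$ since we carry the context of at most two nested searches, each of depth $\le d_1$. The outer search costs $O(\Delta^{d_1})$ traversals and reports $O(\Delta^{d_1})$ white nodes; each triggers an inner search of cost $O(\Delta^{d_1-1})$, for a total of $O(\Delta^{2d_1-1})$. The main obstacle I anticipate is the witness analysis for $x\in A$ in the non-simple setting: one must rule out that multi-edges or loops create a short-cut from $v$ to a black node closer than $d_1$, which is precisely what the layer-difference lower bound on shortest-path distance delivers.
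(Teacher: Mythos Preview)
Your argument is correct and matches the paper's approach: both establish that for $x\in A$ there is a white node reachable by a $d_1$-white-path with white radius $d_1-1$, while for $x\in B$ every such node has white radius at most $d_1-2$, and both bound the cost by viewing the white-radius test as a nested search of depth $d_1-1$. The paper invokes Properties~\ref{maxR} and~\ref{maxR+} rather than redoing the layer arithmetic directly; your second range $\{I+3,\ldots,I+d_1\}$ in the $x\in B$ case is in fact vacuous, since any white path from $x\in L_I$ must enter $L_{I-1}$ at its first step (the layers $L_I$ and $L_{I+1}$ being black), but this superfluous case does no harm.
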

\begin{proof}
According to Property~\ref{maxR+}, for $x\in A$, there exists a
white node whose white radius is not less than $\lfloor
d_2/2\rfloor-1$ that can be reached by a white path from $x$.
According to Property~\ref{maxR}, for $x\in B$, the white radius
of nodes that have a white path to $x$ are not greater than
$d_1-2$. By $\mathcal{AL}$, we have $\lfloor d_2/2\rfloor\geq
d_1$. Therefore, we know whether $x$ is in class $A$ or in class
$B$ by checking the maximal white radius of nodes that have a
white path to $x$. Thus, if $x\in A$, $\mathit{LS}_1$ will find a
node with white radius $d_1-1$, and $\mathit{A\_or\_B}(x)$ returns
``$A$"; if $x\in B$, no such node will be found, and
$\mathit{A\_or\_B}(x)$ returns ``$B$". \if 0 For $x\in D$ and
$child(x)\neq\varnothing$, there is a $d_2$-white-path from $x$ to
a child of $x$, then there is a white node in this path whose
white radius is $d_1-1$. $\mathit{A\_or\_B}(x)$ returns ``$A$".\fi

By definition, $\mathcal{R}$ is in node $x$ when
$\mathit{A\_or\_B}(x)$ exits. The number of edge traversals in the
call $\mathit{A\_or\_B}(x)$ is not greater than that of the local
search from $x$ within radius $2d_1-1$. By Lemma~\ref{LSP}, there
are at most $O(\Delta^{2d_1-1})$ edge traversals in the call
$\mathit{A\_or\_B}(x)$, and the memory space of $\mathcal{R}$ is
$O(d_1\log\Delta)$ bits.
\end{proof}

\begin{lemma}
For a black node $u$, let $\mathcal{R}$ know to which class $u$
belongs. For $u\notin \mathit{BL}_1\cup\{r\}$, when
$\mathit{Get\_Par\_Path}(u)$ exits, $\mathit{parent}(u)$ is
returned, and $\mathcal{R}$ is in $\mathit{parent}(u)$ and
recorded the parent path of $u$ in its memory. For $u\in
\mathit{BL}_1$, $\mathit{Get\_Par\_Path}(u)$ can identify that $u$
is in $\mathit{BL}_1$ and makes $\mathcal{R}$ return to $r$. In a
call to $\mathit{Get\_Par\_Path}$, there are at most
$O(\Delta^{d_2+2})$ edge traversals, and the robot needs
$O(d_2\log\Delta)$ bits of memory space. \label{GetPar}
\end{lemma}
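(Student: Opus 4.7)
The plan is to handle the four cases of $\mathit{Get\_Par\_Path}$ separately (one for each class of $u$), then deal with the root identification, and finally bound memory and edge traversals. For each case it suffices to establish two invariants: (PE-correctness) the path-enumeration step reports, in lexicographically increasing order, every path of length $d(\mathit{BL}_i,\mathit{BL}_{i-1})$ from $u$ to a node in the adjacent black layer; and (NC-correctness) the node-checking step accepts a reported endpoint $x$ iff $x\in pred(u)$. Combined, these imply that the first accepted $x$ is exactly $\mathit{parent}(u)$ and that the path recorded by the robot is $\mathit{par\_path}(u)$, which is the claim.

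Cases $C\rightarrow B$ and $D\rightarrow C$ use a radius-$1$ local search, which is sound because adjacent $C$- and $B$-layers, and adjacent $D$- and $C$-layers, are at distance $1$. Property~\ref{back} guarantees that $u$ has at least one neighbor in $pred(u)$. NC then calls $\mathit{Is\_B}$ (Lemma~\ref{isb}) or $\mathit{C\_or\_D}$ (Lemma~\ref{cord}) to single out the correct layer among the black neighbors reported. Case $B\rightarrow A$ is analogous but with radius $d_1$, since adjacent $B$- and $A$-layers lie at distance $d_1$; NC uses $\mathit{A\_or\_B}$ (Lemma~\ref{aorb}) together with the choice $\lfloor d_2/2\rfloor\ge d_1$ in the definition of $\mathcal{AL}$ to distinguish an $A$-layer endpoint from another $B$-layer endpoint that is merely reachable at that distance. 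In each of these cases, the \emph{first} accepted endpoint corresponds, by lexicographic order of the underlying white local search, to $\mathit{par\_path}(u)$.

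The main obstacle is Case $A\rightarrow D$, where the distance between adjacent $A$- and $D$-layers is $d_2+1$, and we must still enumerate only white paths in $\mathit{pred\_path}(u)$. I would argue that the three nested local searches $\mathit{LS}_1,\mathit{LS}_2,\mathit{LS}_3$ together enumerate exactly the $d_2{+}1$-long white paths from $u$ that reach $\mathit{BL}_{i-1}$: the intermediate white pivot $v$ found at depth $d_1$ by $\mathit{LS}_1$ is retained only if $\mathit{LS}_2$ (radius $d_1-1$) certifies that no black node is close to $v$, and then $\mathit{LS}_3$ extends the path by $d_2-d_1$ further white edges to a $D$-layer endpoint. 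Correctness of the pivot filter follows from Properties~\ref{maxR} and~\ref{maxR+} applied at both endpoints of the layer gap. The NC condition (``$x$ has a $\mathbf{B}$-node neighbor'') correctly picks class-$D$ endpoints by the corollary of Property~\ref{back} stating that every class-$D$ node has a $\mathbf{B}$-node neighbor, while any class-$B$ node reported in error by PE would not satisfy this.

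For the root branch, since no node in $\mathit{BL}_1\cup\{r\}$ has an ancestor in a $B$-layer, case $D\rightarrow C$ exhausts its search with no accepted candidate when $u\in \mathit{BL}_1\cup\{r\}$; the robot detects this, interprets the failure as $u\in \mathit{BL}_1\cup\{r\}$, and follows the port it memorized on leaving $r$ to return. For the complexity bounds, the PE of the dominant case $A\rightarrow D$ is a composition of local searches of radii summing to $d_2+1$, so by Lemma~\ref{LSP} it uses $O(\Delta^{d_2+1})$ edge traversals and reports at most $O(\Delta^{d_1})$ black candidates; each NC call costs $O(\Delta^{2d_1-1})$ via $\mathit{A\_or\_B}$ or $O(\Delta^{d_1+3})$ via $\mathit{Is\_B}/\mathit{C\_or\_D}$. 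Using $d_2\ge 2d_1$ and $d_1\ge 2$, the PE term dominates and we obtain $O(\Delta^{d_2+2})$ overall. The robot stores only the current white-local-search stack (depth $\le d_2+1$) and a candidate path of length $\le d_2+1$, each entry being one port number, giving $O(d_2\log\Delta)$ bits of memory.
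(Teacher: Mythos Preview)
Your overall architecture matches the paper's proof: handle the four cases separately by verifying PE-correctness and NC-correctness, then use lexicographic enumeration to conclude that the first accepted path is $\mathit{par\_path}(u)$; treat $\mathit{BL}_1\cup\{r\}$ via the failure of $D\rightarrow C$; and bound time and space by the dominant $A\rightarrow D$ case. So the route is right, but your execution of the $A\rightarrow D$ case contains several concrete errors.

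First, the distance between an $A$-layer and the adjacent $D$-layer above it is $d_2$, not $d_2+1$: from the definition of $\mathcal{AL}$, $D$ sits at residue $1$ and $A$ at residue $d_2+1$, so $d(\mathit{BL}_i,\mathit{BL}_{i-1})=d_2$. Accordingly $\mathit{LS}_1$ (radius $d_1$) concatenated with $\mathit{LS}_3$ (radius $d_2-d_1$) enumerates paths of length $d_2$, with $\mathit{LS}_2$ acting only as a filter on the pivot $v$, not as part of the path. Second, the black endpoints that PE can report in this case are in class $A$ or class $D$, not class $B$: the pivot filter forces $v$ to lie between $\mathit{BL}_{i-1}$ and $\mathit{BL}_i$ (since $d(\mathit{BL}_i,\mathit{BL}_{i+1})=d_1$, any $v$ on the $B$ side would fail the $\mathit{LS}_2$ test by Property~\ref{maxR}), so $\mathit{LS}_3$ can only reach $A$- or $D$-layer nodes. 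The NC test ``$x$ has a $\mathbf{B}$-node neighbor'' then separates $D$ (yes) from $A$ (no); your claim that it is meant to rule out $B$-layer nodes is incorrect, and would in fact fail, since $B$-layer nodes do have $\mathbf{B}$-node neighbors in the adjacent $C$-layer. Third, your complexity bookkeeping is off: the correct count for $A\rightarrow D$ is $\mathit{LS}(d_1)+O(\Delta^{d_1})\bigl(\mathit{LS}(d_1-1)+\mathit{LS}(d_2-d_1+2)\bigr)=O(\Delta^{d_2+2})$, where the ``$+2$'' comes from the NC check (testing whether a reported $x$ has a $\mathbf{B}$-node neighbor is a depth-$2$ probe), not from the PE radii ``summing to $d_2+1$''. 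With these corrections your argument aligns with the paper's.
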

\begin{proof} Let $\mathcal{R}$ initiate at a node $u\in \mathit{BL}_i$,
$i\geq 2$, when $\mathit{Get\_Par\_Path}(u)$ is called. We check
separately the four cases in the procedure. In each case, two
parts are to be proved: (1) PE can enumerate all paths in
$\mathit{pred\_path}(u)$ and their ends; (2) NC can identify
whether the end nodes reported by PE are in $pred(u)$.

\textbf{ Case(1)}. $u\in C$ ($C\rightarrow B$)

In this case, $pred(u)$ is a subset of the neighbors of $u$, since
$d(\mathit{BL}_{i},\mathit{BL}_{i-1})=1$. Therefore, all paths in
$\mathit{pred\_path}(u)$ can be enumerated by PE, so do the nodes
in $pred(u)$.

Any neighbor $x$ of $u$ belong to class $B$ ($\mathit{BL}_{i-1}$),
class $C$ ($\mathit{BL}_{i}$), or class $D$ ($\mathit{BL}_{i+1}$).
By Lemma~\ref{isb}, if and only if $\mathit{Is\_B}(x)$ returns
``$B$", $u$ is in class $B$, i.e., $pred(u)$.

\textbf{ Case(2)}. $u\in B$ ($B\rightarrow A$)

In this case, $d(\mathit{BL}_{i},\mathit{BL}_{i-1})=d_1$. By the
local search from $u$ within radius $d_1$ (PE), all paths in
$\mathit{pred\_path}(u)$ and all nodes in $pred(u)$ can be
reported.

The reported black nodes belong to class $A$ ($\mathit{BL}_{i-1}$)
or $B$ ($\mathit{BL}_{i}$); among them only the nodes in class $A$
are in $pred(u)$. According to Lemma~\ref{aorb}, if and only if
$\mathit{A\_or\_B}(x)$ returns ``$A$" then $u$ is in class $A$.
Thus by calling $\mathit{A\_or\_B}$ the nodes in $pred(u)$ can be
identified.

\textbf{ Case(3)}. $u\in A$ ($A\rightarrow D$)

In this case, $d(\mathit{BL}_{i},\mathit{BL}_{i-1})=d_2$. By
$\mathit{LS}_1$ and $\mathit{LS}_2$, all white nodes at distance
$d_1$ from both $u$ and $BL_i$ can be reported. As
$d(\mathit{BL}_{i},\mathit{BL}_{i+1})=d_1$, these white nodes are
between $\mathit{BL}_{i}$ and $\mathit{BL}_{i-1}$. The paths in
$\mathit{pred\_path}(u)$ containing such a white node can be
enumerated by $\mathit{LS}_3$. Since every path in
$\mathit{pred\_path}(u)$ contains such a white node, PE can
enumerate all paths in $\mathit{pred\_path}(u)$ and their ends.

\if 0 By $\mathit{LS}_1$, the white node $v$ in the parent path of
$u$ that is at distance $d_1$ from $u$ will be reported. As the
white radius of $v$ is $d_1-1$, in $\mathit{LS}_2$, no black node
will be visited. In $\mathit{LS}_3$, the parent path of $u$ and
$\mathit{parent}(u)$ will be reported. \fi

A black node $x$ reported by PE belongs to class $D$ or $A$.
Through the observations on $\mathcal{AL}$, any node in class $D$
has at least one \textbf{B}-node neighbor, and any node in class
$A$ has no \textbf{B}-node neighbor. So if $x$ has a
\textbf{B}-node neighbor then $x$ belongs to class $D$. Therefore
nodes in $pred(u)$ can be identified.

\textbf{ Case(4)}. $u\in D$ ($D\rightarrow C$)

In this case, $pred(u)$ is a subset of the neighbors of $u$. PE
can enumerate all paths in $\mathit{pred\_path}(u)$ and all nodes
in $pred(u)$.

For any neighbor $x$ of $u$, $x\in C$ or $x\in D$. By
Lemma~\ref{cord}, $\mathit{C\_or\_D}(x)$ can determine whether $x$
is in class $C$ which means $x\in pred(u)$.

\vspace{6pt} All the local searches in this procedure are
performed in increasing lexicographic order. According to
$\mathcal{AL}$, in the above cases, the node in $pred(u)$ first be
found is $\mathit{parent}(u)$, and the path stored in the memory
of $\mathcal{R}$ is the parent path of $u$. Since
$\mathit{parent}(u)$ exists, $\mathit{Get\_Par\_Path}(u)$ returns
$\mathit{parent}(u)$.

For $u\in \mathit{BL}_1\cup \{r\}$, let $\mathcal{R}$ take $u$ as
a class $D$ node. We can verify that $\mathit{Is\_B}(u)$ returns
``$D$", and $D\rightarrow C$ will fail to find the parent path of
$u$. By the above discussion, for nodes $x\notin \mathit{BL}_1\cup
\{r\}$, $\mathit{Get\_Par\_Path}(x)$ returns $\mathit{parent}(x)$.
Thus $\mathit{Get\_Par\_Path}(u)$ identifies that $u$ is in
$\mathit{BL}_1\cup \{r\}$. $\mathcal{R}$ then moves to $r$ from
$u$ by the memorized port.

The worst-case number of edge traversals occurs in Case
$A\rightarrow D$. By Lemma~\ref{LSP}, this number is not greater
than
$\mathit{LS}(d_1)+O(\Delta^{d_1})\bigl(\mathit{LS}(d_1-1)+\mathit{LS}(d_2-d_1+2)\bigr
)=O(\Delta^{d_2+2})$. For the memory of $\mathcal{R}$, in the
worst case ($A\rightarrow D$), the robot records a path of length
$d_2+2$ and maintains a constant number of variables, therefore
the space is $O(d_2\log\Delta)$ bits.

\end{proof}

\begin{lemma}
Let $u\notin \mathit{BL}_1\cup\{r\} $ be a black node, and let $P$
be a white path from $u$. Let the robot know to which class $u$
belongs. $\mathit{Check\_Par\_Path}(u,P)$ returns ``true" if $P$
is the parent path of $u$ and returns ``false" otherwise. For
$u\in \mathit{BL}_1\cup \{r\}$, let $\mathcal{R}$ take $u$ as a
class $D$ node. $\mathit{Check\_Par\_Path}(u,P)$ returns ``true"
for any path $P$ from $u$ to $r$ containing one edge. When the
procedure exits, $\mathcal{R}$ is in node $u$. There are at most
$O(\Delta^{d_2+2})$ edge traversals in a call to
$\mathit{Check\_Par\_Path}$, and the memory space of $\mathcal{R}$
is $O(d_2\log\Delta)$ bits. \label{ChkSon}

\end{lemma}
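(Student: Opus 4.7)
The plan is to mirror the proof of Lemma~\ref{GetPar} (correctness of $\mathit{Get\_Par\_Path}$), exploiting the fact that $\mathit{Check\_Par\_Path}$ runs exactly the same PE/NC machinery and differs only in that PE enumerates paths in decreasing lexicographic order starting from the candidate path $P$. Consequently, by invoking Lemma~\ref{GetPar} case-by-case ($C\to B$, $B\to A$, $A\to D$, $D\to C$), I can already conclude that PE visits every element of $\mathit{pred\_path}(u)$ together with its endpoint and that NC correctly decides whether each enumerated endpoint lies in $pred(u)$; all that changes is the order in which these paths are enumerated.

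For $u\notin \mathit{BL}_1\cup\{r\}$, I would then argue directly from the definition of the parent path: $\mathit{par\_path}(u)$ is the lexicographically smallest element of $\mathit{pred\_path}(u)$. The decreasing-order enumeration from $P$ hits exactly those elements of $\mathit{pred\_path}(u)$ that are strictly smaller than $P$, so NC detects a $pred(u)$-node iff such a strictly smaller path exists, iff $P\neq \mathit{par\_path}(u)$. Hence $\mathit{Check\_Par\_Path}(u,P)$ returns ``true'' precisely when $P$ is the parent path of $u$.

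For $u\in \mathit{BL}_1\cup\{r\}$ treated as a class~$D$ node, the procedure runs the $D\to C$ branch, performing a radius-$1$ search and calling $\mathit{C\_or\_D}$ on every black neighbor $v$ reached via a port smaller than the one used by $P$. All such neighbors lie in $L_0\cup L_1\cup L_2$; those different from $r$ belong to class~$D$, while $\mathit{C\_or\_D}(r)$ returns ``$D$'' because $r$ has no class-$B$ neighbor (the very reason why $D\to C$ fails on root-adjacent nodes during $\mathit{Get\_Par\_Path}$). Hence NC never returns a node, and $\mathit{Check\_Par\_Path}(u,P)$ reports ``true'' for every length-$1$ path $P$ from $u$ to $r$.

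The remaining claims are bookkeeping. The robot ends at $u$ because every subroutine used ($\mathit{LocalSearch}$ and the routines of Lemmas~\ref{isb}--\ref{aorb}) restores the robot to its starting node. The $O(\Delta^{d_2+2})$ edge-traversal and $O(d_2\log\Delta)$ memory bounds follow immediately from the worst case ($A\to D$) of Lemma~\ref{GetPar}, since no additional searches or variables are introduced. The only real subtlety I foresee is the $\mathit{BL}_1\cup\{r\}$ case: in a non-simple graph $u$ and $r$ may be joined by several edges, so I must confirm that whenever PE revisits $r$ via an alternative port, $\mathit{C\_or\_D}(r)$ still answers ``$D$'' and $r$ is not spuriously flagged as a predecessor.
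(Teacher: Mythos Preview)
Your proposal is correct and follows essentially the same approach as the paper: both reduce to Lemma~\ref{GetPar} by observing that $\mathit{Check\_Par\_Path}$ uses the identical PE/NC machinery in decreasing lexicographic order, so a node in $pred(u)$ is detected iff some element of $\mathit{pred\_path}(u)$ is strictly smaller than $P$, and both handle the $\mathit{BL}_1\cup\{r\}$ case by arguing that $\mathit{C\_or\_D}$ returns ``$D$'' on every black neighbor encountered (the paper phrases this via $\mathit{Is\_B}$, you via the absence of class-$B$ neighbors, but the content is the same). Your treatment of the multi-edge subtlety and the complexity bounds also matches the paper's reasoning.
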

\begin{proof}
Procedure $\mathit{Check\_Par\_Path}$ is similar to
$\mathit{Get\_Par\_Path}$ except that the PE part of
$\mathit{Check\_Par\_Path}$ is performed in decreasing
lexicographic order. By Lemma~\ref{GetPar}, for $u\notin
\mathit{BL}_1\cup\{r\} $, providing the robot knows in which class
$u$ is, $\mathit{Check\_Par\_Path}(u,P)$ will find a path in
$\mathit{pred\_path}(u)$ that is lexicographically smaller than
$P$ if this path exists. Whenever a path in
$\mathit{pred\_path(u)}$ is found by
$\mathit{Check\_Par\_Path}(u,P)$, $P$ is not the parent path of
$u$ according to the definition of parent path. $\mathcal{R}$
returns to $u$ via the recorded path. If no such path is found,
$P$ is the minimal path in $\mathit{pred\_path(u)}$, i.e., the
parent path of $u$. $\mathcal{R}$ returns to $u$ by
Lemma~\ref{LSP}. Therefore $\mathit{Check\_Par\_Path}(u,P)$ can
tell whether $P$ is the parent path of $u$. For any $x\in
\mathit{BL}_1\cup \{r\}$, $\mathit{Is\_B}(x)$ returns ``$D$", and
thus $\mathit{C\_or\_D}(x)$ returns ``$D$". Therefore,
$\mathit{Check\_Par\_Path}(u,P)$ returns ``true" for any path $P$
from $u$ to $r$ of length 1. The time and space complexity is
similar as $\mathit{Get\_Par\_Path}$.
\end{proof}

\begin{lemma}
Let $u\neq r$ be a black node, and let $P$ be a white path from
$u$. Let $P'$ be the minimal child path of $u$ greater than $P$ if
this path exists, and let $\mathcal{R}$ know to which class $u$
belongs. Procedure $\mathit{Next\_Child\_Path}(u,P)$ returns the
end of $P'$ if $P'$ exists, and $\mathcal{R}$ is in the end node
of $P'$ when the procedure exits. If $P'$ does not exist, then
$\mathit{Next\_Child\_Path}$ returns ``false" and $\mathcal{R}$
moves to $u$. There are at most $O(\Delta^{2d_2+2})$ edge
traversals in $\mathit{Next\_Child\_Path}$, and the memory space
of $\mathcal{R}$ is $O(d_2\log\Delta)$ bits.\label{GetChd}
\end{lemma}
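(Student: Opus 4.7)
The plan is to mirror the structure of the proof of Lemma~\ref{GetPar}, splitting the argument into a correctness part (the four $C\to D$, $D\to A$, $A\to B$, $B\to C$ cases of $\mathit{Enumerating}$ together with the $\mathit{Identifying}$ step) and a resource accounting part. As a first step I would fix $u\in\mathit{BL}_i$ and let $d=d(\mathit{BL}_i,\mathit{BL}_{i+1})$, and observe that every child path of $u$ is by definition a $d$-white-path whose end lies in $succ(u)$. Thus it suffices to show that PE enumerates, in strict lexicographic order and starting immediately after $P$, every $d$-white-path from $u$ whose end is in $succ(u)$, and that NC flags precisely those paths.

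For each of the four cases I would argue, exactly as in Lemma~\ref{GetPar}, that the chosen local-search radius (which is $1$ for $C\to D$ and $B\to C$, $d_1$ for $A\to B$, and $d_2$ for $D\to A$) is enough to reach every node in $succ(u)$ via a $d$-white-path. Membership in $succ(u)$ is then decided class-by-class using Lemmas~\ref{isb}, \ref{cord} and \ref{aorb}: in $C\to D$ an end $x$ is in $succ(u)$ iff $\mathit{Is\_B}(x)$ returns ``$D$'' or $\mathit{C\_or\_D}(x)$ returns ``$D$'' for a \textbf{B}-node; in $D\to A$ iff $x$ has no \textbf{B}-node neighbor (using Property~\ref{back} to rule out class $D$); in $A\to B$ iff $\mathit{A\_or\_B}(x)$ returns ``$B$''; in $B\to C$ iff $x$ has no white neighbor. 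In each case one uses that $\lfloor d_2/2\rfloor\geq d_1$ and $d_1\geq 2$ to separate the two possible classes that the candidate $x$ could lie in.

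The main obstacle, and the piece that distinguishes this lemma from Lemma~\ref{GetPar}, is the $\mathit{Identifying}$ step: not every $d$-white-path from $u$ to a node $x\in succ(u)$ is a child path. A child path $P'$ of $u$ ending at $x$ is, by definition, exactly the reverse of $\mathit{par\_path}(x)$, so I would verify that $P'$ is a child path of $u$ iff $\mathit{Check\_Par\_Path}(x,P'^{-1})$ returns ``true''; this is immediate from Lemma~\ref{ChkSon}, once we observe that at the moment of the call the robot already knows the class of $x$ from the outcome of NC. Because $\mathit{Enumerating}$ traverses the candidate paths in strictly increasing lexicographic order starting just after $P$, the first $P'$ that is confirmed by $\mathit{Check\_Par\_Path}$ is the lexicographically smallest child path of $u$ greater than $P$. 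If no candidate is confirmed, the local search returns the robot to $u$ by Lemma~\ref{LSP} and the procedure outputs ``false''.

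Finally I would do the cost accounting. The worst case is $D\to A$: PE is a white local search of radius $d_2$, producing $O(\Delta^{d_2})$ candidate ends by Lemma~\ref{LSP}; for each candidate, NC costs at most $O(\Delta^{2d_1-1})$ (Lemma~\ref{aorb}) and $\mathit{Identifying}$ calls $\mathit{Check\_Par\_Path}$ at cost $O(\Delta^{d_2+2})$ (Lemma~\ref{ChkSon}). Since $2d_1-1\leq d_2+2$, the per-candidate work is $O(\Delta^{d_2+2})$, giving a total of $O(\Delta^{d_2}\cdot\Delta^{d_2+2})=O(\Delta^{2d_2+2})$ edge traversals. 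For the memory, the robot simultaneously stores the current enumerated path (length $\leq d_2$), the recursion stack of the two nested local searches, and the context of $\mathit{Check\_Par\_Path}$; each of these contributes $O(d_2\log\Delta)$ bits, yielding the claimed $O(d_2\log\Delta)$ bound.
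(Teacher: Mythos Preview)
Your proposal is correct and follows essentially the same approach as the paper: a case analysis of the four $X\to Y$ transitions for $\mathit{Enumerating}$ (using Lemmas~\ref{isb}, \ref{cord}, \ref{aorb} and the \textbf{B}-node/white-neighbor tests exactly as the paper does), verification of candidate paths via $\mathit{Check\_Par\_Path}$ in the $\mathit{Identifying}$ step through Lemma~\ref{ChkSon}, and the same worst-case $D\to A$ accounting yielding $O(\Delta^{2d_2+2})$ traversals and $O(d_2\log\Delta)$ memory. One small slip: in the $D\to A$ case the NC step actually checks for a \textbf{B}-node neighbor rather than invoking $\mathit{A\_or\_B}$, so citing Lemma~\ref{aorb} there is misplaced, but the cost of that check is only $O(\Delta^{2})\le O(\Delta^{2d_1-1})$ and is in any event dominated by the $O(\Delta^{d_2+2})$ call to $\mathit{Check\_Par\_Path}$, so your final bound is unaffected.
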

\begin{proof} Let $\mathcal{R}$ start from node $u\in \mathit{BL}_i$ ($i\geq 1$). We first discuss
four cases in the $\mathit{Enumerating}$ procedure. In each case,
two parts are to be proved: (1) PE can enumerate all paths in
$succ\_path(u)$ that are greater than $P$; (2) NC can identify
whether the end nodes of the paths reported by PE are in
$succ(u)$.

\textbf{ Case(1)}. $u\in C$ ($C\rightarrow D$)

In this case, $succ(u)$ is a subset of the neighbors of $u$, since
$d(\mathit{BL}_{i},\mathit{BL}_{i+1})=1$. Therefore, all paths in
$succ\_path(u)$ that are greater than $P$ can be enumerated by PE.

The neighbors of $u$ belong to class $B$ ($\mathit{BL}_{i-1}$),
class $C$ ($\mathit{BL}_{i}$), or class $D$ ($\mathit{BL}_{i+1}$).
Let $x$ be in $succ(u)$. By Lemma~\ref{isb}, if and only if $x$ is
not a \textbf{B}-node, $\mathit{Is\_B}(x)$ returns ``$D$". By
Lemma~\ref{cord}, if and only if $x$ is a \textbf{B}-node,
$\mathit{C\_or\_D}(x)$ returns ``$D$". Thus NC can identify
whether $x$ is in $succ(u)$. \if 0
 If $\mathit{Is\_B}(x)$ returns
``$D$", by Lemma~\ref{isb}, $x$ is in $succ(u)$. If
$\mathit{Is\_B}(x)$ returns ``\textbf{B}-node", then $x$ is a
\textbf{B}-node. If $\mathit{C\_or\_D}(x)$ returns ``$D$", by
Lemma~\ref{cord}, $x$ is in $succ(u)$. Let $y$ is in $succ(u)$. By
Lemma~\ref{isb}, if $y$ is not a \textbf{B}-node,
$\mathit{Is\_B}(y)$ returns ``$D$". By Lemma~\ref{cord}, if $y$ is
a \textbf{B}-node, $\mathit{C\_or\_D}(y)$ returns ``$D$". Thus NC
can identify whether $x$ is in $succ(u)$.\fi

 \textbf{ Case(2)}. $u\in D$ ($D\rightarrow A$)

In this case, $d(\mathit{BL}_i,\mathit{BL}_{i+1})=d_2$, the local
search from $u$ within radius $d_2$ can report all paths in
$succ\_path(u)$ that are greater than $P$ and their end nodes.

Any black node reported by PE belongs to either class $A$ or class
$D$. Only the nodes in $A$ belong to $succ(u)$. From the
observations of $\mathcal{AL}$ ( i.e., any node in class $D$ has
at least one \textbf{B}-node neighbor, but any node in class $A$
has none), $D\rightarrow A$ identifies whether a reported node is
in $succ(u)$.

 \textbf{ Case(3)}. $u\in A$ ($A\rightarrow B$)

For $d(\mathit{BL}_i,\mathit{BL}_{i+1})=d_1$, the local search
from $u$ within radius $d_1$ can report all paths in
$succ\_path(u)$ that are greater than $P$ and their end nodes.

All the nodes in $succ(u)$ can be reported by $\mathit{LS}_1$. For
the reported black nodes, only the nodes in class $B$ are in
$succ(u)$. According to Lemma~\ref{aorb}, $u$ is in class $B$ iff
$\mathit{A\_or\_B}(x)$ returns ``$B$". Thus by calling
$\mathit{A\_or\_B}$ the nodes in $succ(u)$ can be identified.

 \textbf{ Case(4)}. $u\in B$ ($B\rightarrow C$)

For $succ(u)$ are the neighbors of $u$, $LS_1$ can report all
paths in $succ\_path(u)$ greater than $P$ and their end nodes. Any
layer $C$ node is a black node without white neighbors. Thus NC
can identify whether the end nodes of the paths reported by PE are
in $succ(u)$.

In all above cases, if a node reported by PE is identified as a
node in $succ(u)$ by NC, then the path $P'$ reported by PE is in
$succ\_path(u)$.

Now we consider the $\mathit{Identifying}$ procedure. Let $x$ be
the node returned by $\mathit{Enumerating}$. According to
Lemma~\ref{ChkSon}, $\mathit{Check\_Par\_Path}(x,P'^{-1})$ can
tell whether $P'^{-1}$ is the parent path of $x$, and if so,
$\mathcal{R}$ returns to $x$. Therefore, the minimal child path of
$u$ that is greater than $P$ will be identified if it exists. If
it does not exist, all paths reported by PE do not pass
$\mathit{Identifying}$. $\mathcal{R}$ returns to $u$ in the end,
and the procedure returns ``false".

Denote by $T_{X\rightarrow Y}$ the number of edge traversals of
each case in $\mathit{Next\_Child\_Path}$ and
$\mathit{Get\_Par\_Path}$. By Lemma~\ref{LSP}, Case $D\rightarrow
A$ has the maximal number of edge traversals that is
$T_{D\rightarrow A}\leq
\mathit{LS}(d_2+2)+O(\Delta^{d_2})T_{A\rightarrow
D}=O(\Delta^{2d_2+2})$. For the memory of $\mathcal{R}$, in the
worst case ($D\rightarrow A$), the robot records two paths of
length $d_2$ and $d_2+2$ and maintains a constant number of
variables, thus the space is $O(d_2\log\Delta)$ bits.
\end{proof}

We consider the cases that $r$ is an input of
$\mathit{Next\_Child\_Path}$.
\begin{lemma}
Let $P$ be a path from $u=r$, containing only one edge $e$ ($e$
maybe a self loop). Let $port(e,r)\neq \mathtt{deg}(r)-1$, and let
$\mathcal{R}$ know that $u$ is a class $C$ node.
$\mathit{Next\_Child\_Path}(u,P)$ identifies the path $P'$,
containing only one edge $e'$ from $r$ such that
$port(e',r)=port(e,r)+1$, as the minimal child path of $r$ that is
greater than $P$.\label{DFSroot}
\end{lemma}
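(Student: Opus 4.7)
Since $d(r,r)=0$, the root $r$ lies in class $C$, so $\mathit{Next\_Child\_Path}(r,P)$ enters Case (1) ($C\rightarrow D$) of $\mathit{Enumerating}$. Recall that $d(\mathit{BL}_0,\mathit{BL}_1)=1$ and $succ(r)=\mathit{BL}_1$, which is exactly the set of neighbors of $r$ (if we ignore self-loops; the same argument will also cover the self-loop case, where the ``neighbor'' is $r$ itself). The strategy is to trace the execution step by step and show that $\mathit{Enumerating}$ outputs the endpoint of $P'$ first and that $\mathit{Identifying}$ then accepts it via $\mathit{Check\_Par\_Path}$.

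First, I would analyze the PE step. Within radius $1$ the $\mathit{LocalSearch}$ subroutine tries outgoing ports from $u=r$ in strictly increasing order $0,1,\dots,\mathtt{deg}(r)-1$ and does not recurse further, so each reported 1-edge path corresponds to a distinct outgoing port. When the search is resumed from the stored context of $P$, the very next port examined is $port(e,r)+1$, which exists by the hypothesis $port(e,r)\neq \mathtt{deg}(r)-1$; the path traversing this port is precisely $P'$, and its endpoint $x$ is reported.

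Next, I would verify that the NC step accepts $x$. Every endpoint reached by a single edge from $r$ lies in $\mathit{BL}_1\cup\{r\}$ and is therefore handled as a class $D$ node (the usual convention adopted in Lemmas~\ref{GetPar} and~\ref{ChkSon}). Two subcases arise. If $x$ is not a $\textbf{B}$-node, Lemma~\ref{isb} gives that $\mathit{Is\_B}(x)$ returns ``$D$'' and NC returns $x$ immediately. If $x$ is a $\textbf{B}$-node, $\mathit{Is\_B}(x)$ returns ``$\textbf{B}$-node'' and then $\mathit{C\_or\_D}(x)$ is invoked; applying Lemma~\ref{cord} for $x\in\mathit{BL}_1$ (or, for the self-loop subcase $x=r$, the calculation used in the proof of Lemma~\ref{ChkSon} showing that no neighbor of $r$ is identified as a $B$-node) yields the answer ``$D$'', so NC again returns $x$. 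Thus $\mathit{Enumerating}$ terminates with $x$ and the recorded path $P'$.

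Finally, I would invoke $\mathit{Identifying}$: it calls $\mathit{Check\_Par\_Path}(x,P'^{-1})$, and because $x$ is handled as a class $D$ node and $P'^{-1}$ is a 1-edge path from $x$ to $r$, Lemma~\ref{ChkSon} guarantees the return value ``true''. Therefore $\mathit{Next\_Child\_Path}(r,P)$ returns the endpoint of $P'$, identifying $P'$ as the minimal child path of $r$ greater than $P$. The only delicate point is keeping the bookkeeping of the revised local search consistent, so that restoring the context for $P$ really causes enumeration to resume at port $port(e,r)+1$ rather than re-reporting $P$ itself or skipping ports; once this is granted, the rest is a mechanical composition of Lemmas~\ref{isb},~\ref{cord}, and~\ref{ChkSon}, so I expect this bookkeeping to be the main (minor) obstacle.
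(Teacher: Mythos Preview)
Your proposal is correct and follows essentially the same line as the paper's proof: trace the $C\rightarrow D$ branch of $\mathit{Enumerating}$ to see that the endpoint of $P'$ is reported first, then invoke the $\mathit{Check\_Par\_Path}$ behaviour on $\mathit{BL}_1\cup\{r\}$ (Lemma~\ref{ChkSon}) to conclude that $\mathit{Identifying}$ accepts. Your case split on whether $x$ is a \textbf{B}-node is in fact more careful than the paper, which simply asserts that $\mathit{Is\_B}(x)$ returns ``$D$'' for every $x\in\mathit{BL}_1\cup\{r\}$ and relies on that for both the NC step and the $\mathit{Check\_Par\_Path}$ call.
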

\begin{proof}
We can verify that for any $x\in \mathit{BL}_1\cup \{r\}$,
$\mathit{Is\_B}(x)$ returns ``$D$". Thus the following two
statements hold. (1) In the $\mathit{Enumerating}$ part of
$\mathit{Next\_Child\_Path}(r,P)$, the end node $u$ of $P'$ will
be returned. (2) For any path $P''$ from $u$ to $r$ that contains
one edge, $\mathit{Check\_Par\_Path}(u,P'')$ returns ``true". Thus
the lemma is proved.
\end{proof}

The overall exploration performed by our algorithm is the DFSs of
subtrees rooted at each $\mathit{BL}_1$ node of the implicit
spanning tree along with an exploration of
$\mathit{BL}_1\cup\{r\}$. The robot starts from $r$ and explores
each node in $\mathit{BL}_1$ and then explores the subtree rooted
at each node. By
Lemma~\ref{GetPar},~\ref{ChkSon},~\ref{GetChd},~\ref{DFSroot},
starting from any $x\in \mathit{BL}_1$, the robot can conduct a
DFS of the subtree rooted at $x$. Since the robot can identify
nodes in $\mathit{BL}_1$, it can identify whether the DFS of a
subtree is finished. If there are multi-edges between $r$ and $x$,
the subtree of $x$ will be explored more than once. If $r$ has
self loops, $r$ will be identified as a $D$ layer node but without
any child in the spanning tree. In DFSs all the black nodes will
be visited. For any white node $y$, let $\mathit{BL}_i$ be the
black layer such that $d(r,\mathit{BL}_i)<d(r,y)$ and
$\ell=d(y,\mathit{BL}_i)$ is the minimal. By Property~\ref{back},
there exists $u\in \mathit{BL}_i$ such that there is an
$\ell$-white-path from $u$ to $y$. Thus, the PE procedure of
$\mathit{Next\_Child\_Path}(u)$ in a DFS will visit $y$.
Therefore, all white nodes will also be visited by DFSs, and thus
all the nodes in $G$ will be visited. The robot stops once the
exploration is completed, i.e., the robot returns to $r$ via the
largest port at $r$.

\subsection{Bound on the Number of Edge Traversals}
By Lemma~\ref{GetPar},~\ref{GetChd}, the maximal number of edge
traversals of one call to an exploration procedure is
$O(\Delta^{2d_2+2})$. In the DFS, when the robot moves from node
$u$ to $\mathit{parent}(u)$ through the parent path $P$ of $u$ in
state $\mathtt{up}$, the robot has to move back to $u$ to search
for the minimal child path greater than $P^{-1}$. The total number
of edge traversals of these moving backs is not greater than
$bd_2$ where $b=o(n)$ is the number of black nodes. By
Lemma~\ref{DFSroot}, the edges from $r$ are all identified as
child paths in the DFS. If there are $q$ edges between $r$ and a
$\mathit{BL}_1$ node $x$, the subtree rooted at $x$ will be
traversed $q$ times. Denote by $T_{all}$ the total number of edge
traversals by the robot. We have $T_{all}\leq
\Delta(O(\Delta^{2d_2+2})+d_2)o(n)=O(\Delta^{2d_2+3}n)+o(d_2\Delta
n)$.

For simple graphs, the repetitive traversals can be avoided. When
using our algorithm to explore a simple graph labeled by the 1-bit
labeling scheme of~\cite{Ilcinkas08} ($\langle r,2,4\rangle$), the
total number of edge traversals by the robot is $O(\Delta^{10}n)$
which is similar to that in~\cite{Ilcinkas08}.

Given an $\mathcal{AL}$ labeling $\langle r,d_1,d_2 \rangle$ on
$G$ with $L$-ratio $\rho'$. If there are six black layers and
$D=d_1+d_2+3$, the labeling has the minimal $L$-ratio
$\frac{d_1+d_2+4}{6}$, i.e., $\rho'\geq\frac{d_1+d_2+4}{6}$, so
$d_1+d_2\leq 6\rho'-4$. For $d_1\geq 2$, we have $2d_2+3 \leq
2(d_1+d_2)-1\leq 12\rho'-9$. Thus our exploration algorithm
completes in time $O(\Delta^{12\rho'-9}n)+ o(\rho'\Delta n)$.
Since no more than a constant number of paths need to be stored at
the same time and the length of such a path is not greater than
$d_2$, $O(d_2\log \Delta)=O(\rho'\log \Delta)$ bits of memory is
necessary for the robot to explore the graph.

\section{Exploration While Labeling\label{labeling}}

We present an algorithm allowing the robot to label the graph
according to an $\mathcal{AL}$ labeling. As in~\cite{Ilcinkas08},
we assume that before labeling, the graph nodes are labeled by an
initial color named ``blank" that the robot can identify. The
labeling algorithm takes as input an $\mathcal{AL}$ labeling
$L=\langle r,d_1,d_2 \rangle$ and labels the black layers in
order. Denote by $G_i$ the subgraph of graph $G$ induced by all
nodes at distance at most $d(r,\mathit{BL}_i)$ from the root. In
phase $i$ ($i\geq 2$) of the algorithm, the robot starts from the
root and traverses all nodes in $G_i$ and colors the nodes in
$\mathit{BL}_i$ black and colors the nodes in layers between
$\mathit{BL}_{i-1}$ and $\mathit{BL}_{i}$ white. At the end of
phase $i$, the robot has colored $G_i$ according to $L$ and
returned to the root. During the labeling, the labeling algorithm
labels each node only once.

In phase $i$, we call $\mathit{BL}_{i-1}$ the \emph{border layer},
nodes in the border layer \emph{border nodes}, and the set of
nodes that are in the layers between $\mathit{BL}_{i-1}$ and
$\mathit{BL}_{i}$ the \emph{working interval}. In this section, we
always use $\mathit{BL}_{i-1}$ to denote the border layer.

Initially, the robot labels the root (phase 0) and its neighbors
black (phase 1). It then returns to the root. In phase $i$ ($i\geq
2$), if the border layer belongs to class $A$ or $D$, the labeling
procedure has two stages:
\vspace{4pt}

\noindent(1) The robot colors all nodes in $\mathit{BL}_i$ black
and returns to $r$.

\noindent(2) The robot colors all nodes in the working interval
white and returns to $r$. \vspace{4pt}

\noindent If the border layer belongs to class $B$ or $C$, there
is only stage 1. We use $X.x$ to denote the stage of the labeling
algorithm in which the border layer belongs to class $X$ and the
stage is $x$. A 3-bit variable $\mathtt{stage}$ is used to store
the stage, initialized to $D.1$ in phase $2$.

The labeling algorithm includes two procedures: (1) the
exploration procedure; (2) the labeling procedure. The exploration
procedure is a revision of the exploration procedure in
Section~\ref{exp}. In a stage of phase $i$, the robot identifies
some border nodes by the exploration procedure and calls the
labeling procedure from each of these nodes to label the blank
nodes. After calling the labeling procedure from a node, the robot
sets state to $\mathtt{up}$ and moves up to the parent of the
node. When the robot returns to $r$ from the largest port,
variable $\mathtt{stage}$ transforms according to the following
diagram.
\begin{figure*}[ht]
   \hbox to\textwidth{\hfil\includegraphics[width=0.5\textwidth]{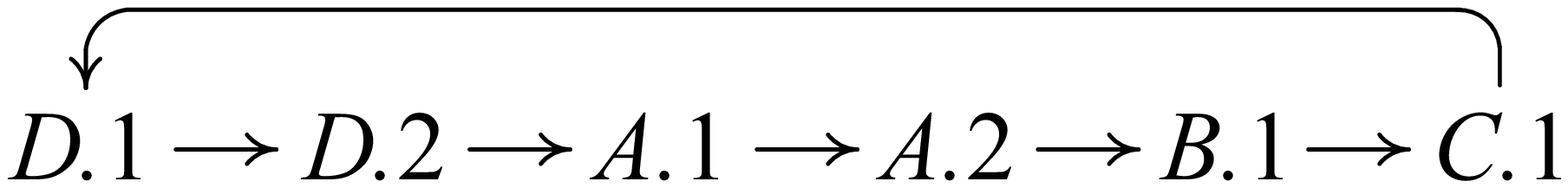}\hfil}%
   \label{A2B}%
\end{figure*}

\subsection{Labeling the Nodes}

The robot uses the ${Label\_Succ}$ procedure to color nodes. In
stage $*.1$, for a node $u$ in the border layer, ${Label\_Succ}$
colors all nodes in $succ(u)$ black. In stage $A.2$ and $D.2$,
procedure ${Label\_Succ}$ colors all nodes in the working interval
white.

${Label\_Succ}$ accepts a parameter: $u$, a node in the border
layer. The detail of ${Label\_Succ}$ is given in the following
where we consider six cases.

 (1) \textbf{$\mathtt{stage}=B.1$} or $C.1$. The robot
labels all blank neighbors of $u$ black.

 (2) \textbf{$\mathtt{stage}=D.1$}. The robot performs a
local search from $u$ within radius $d_2$. For each reported blank
node $x$, it performs a local search from $x$ within radius
$d_2-1$. If all black nodes visited in the local search have no
\textbf{B}-node neighbor, then the robot colors $x$ black.

 (3) \textbf{$\mathtt{stage}=D.2$}. The robot performs a
local search from $u$ within radius $d_2$. For every visited blank
node, the robot colors the node white.

 (4) \textbf{$\mathtt{stage}=A.1$}. The robot performs a
local search from $u$ within radius $d_1$. For every blank node
$x$ reported, it performs a local search from $x$ within radius
$d_1-1$. If all black nodes visited in the local search do not
have any white neighbor, then the robot colors $x$ black.

 (5) \textbf{$\mathtt{stage}=A.2$}. The robot performs a
local search from $x$ within radius $d_1$. For every visited blank
node, the robot colors the node white.

\subsection{ Revising the Exploration Procedure }

We revise the exploration procedure in Section~\ref{exp} to
explore the colored subgraph and color the uncolored subgraph. In
a local search, when we say that the robot \emph{ignores} a node,
we mean that as soon as the robot moves in the node, it leaves
this node by the port from which it moves in, not visiting any
neighbor of the node, and continues the local search. The
revisions are given as follows.

The revised $\mathit{Get\_Par\_Path}$ procedure ignores all blank
nodes it visited. The revised $\mathit{Next\_Child\_Path}$
procedure ignores all blank nodes it visited except in the case
where $\mathit{Next\_Child\_Path}(u,P)$ visits a blank node in the
case $X\rightarrow Y$ and $\mathtt{stage}=X.*$. In this case, the
robot returns to $u$ and calls ${Label\_Succ}(u)$.

Table~\ref{table2} gives the operation that the robot performs for
each case of the $\mathit{Next\_Child\_Path}$ procedure when
visiting a blank node in different stages. When ${Label\_Succ}(u)$
terminates, the robot is in node $u$, and it then backtracks to
$\mathit{parent}(u)$ with state $\mathtt{up}$ and continues the
exploration.

\begin{table*}[!]
\caption{ In each case, for each stage the robot performs an
operation
 when visiting a blank node. ``-" means that the robot will not visit a blank node in a combination of a case and a stage.
``$\heartsuit$" denotes the operation to return to $u$ and call
${Label\_Succ}(u)$. ``$\diamondsuit$" denotes the operation to
ignore the blank node.
}%
\begin{center}
\begin{tabular}{|c|c|c|c|c|c|c|}
\hline
Case $\setminus$ $\mathtt{stage}$  & D.1 & D.2 & A.1 & A.2 & B.1 & C.1 \\
\hline $D\rightarrow A$ & $\heartsuit$ & $\heartsuit$ &$\diamondsuit$  &$\diamondsuit$ & - & - \\
\hline $A\rightarrow B$ & - & $\diamondsuit$ &$\heartsuit$  &$\heartsuit$ & $\diamondsuit$ & - \\
\hline $B\rightarrow C$ & $\diamondsuit$ & $\diamondsuit$ &- &$\diamondsuit$ & $\heartsuit$ & $\diamondsuit$ \\
\hline $C\rightarrow D$ & $\diamondsuit$ & $\diamondsuit$ & -  & - & $\diamondsuit$ & $\heartsuit$ \\

\hline
\end{tabular}
\end{center}
\label{table2}
\end{table*}%

\subsection{Correctness}

For a black node $u$, let $u\in \mathit{BL}_k$,
$rd=d(\mathit{BL}_k,\mathit{BL}_{k+1})$, denote by ${wdisc}(u)$
the set of nodes that the robot visits in a white local search
within radius $rd$ from $u$. It is easy to verify that the whole
graph $G$ is colored according to a labeling scheme $L$, if all
the nodes in ${wdisc}(u)$ are colored according to $L$ for any
black node $u$. A black node $u\in L_k$ $(k\geq 0)$ that has no
neighbor in $L_{k+1}$ is called a \emph{leaf node}.

We prove the correctness of the ${Label\_Succ}$ procedure in the
following.

\begin{lemma} Let $u\in \mathit{BL}_{i-1}$, and let $G_{i-1}$ $(i\geq 2)$ be colored according
to an $\mathcal{AL}$ labeling $L$. If $\mathtt{stage}=B.1$ or
$C.1$ or $A.1$ or $D.1$ and some nodes in $\mathit{BL}_i$ are
colored black, ${Label\_Succ}(u)$ colors all blank nodes in
${succ}(u)$ black, not coloring any other nodes. If
$\mathtt{stage}=A.2$ or $D.2$ and $\mathit{BL}_i$ is colored
according to $L$ and some nodes in the working interval are
colored white, ${Label\_Succ}(u)$ colors all blank nodes in
${wdisc}(u)$ white, not coloring any other nodes.\label{Label_n}
\end{lemma}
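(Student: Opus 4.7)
The plan is to handle each of the six cases of $\mathtt{stage}$ separately. The common setup is that, since $G_{i-1}$ is already coloured according to $L$, a node is blank iff it lies in a layer $L_{\ell}$ with $\ell>d(r,\mathit{BL}_{i-1})$; combined with the BFS layer property (neighbours lie in adjacent layers), this controls which blank nodes any local search of a given radius from $u$ can reach. All of the outer searches in $\mathit{Label\_Succ}$ are white local searches, so they halt at coloured black nodes and in particular do not escape past $\mathit{BL}_i$ once it is coloured.

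For stages B.1 and C.1 I would use that $d(\mathit{BL}_{i-1},\mathit{BL}_i)=1$: then $\mathit{succ}(u)$ is a subset of the neighbours of $u$, and every blank neighbour of $u$ must lie in $\mathit{BL}_i$, so directly colouring all blank neighbours black is correct and touches nothing else.

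For stages D.1 and A.1 I would argue that the outer local search from $u$ of radius $d(\mathit{BL}_{i-1},\mathit{BL}_i)$ reaches exactly the blank nodes in $\mathit{succ}(u)$ together with the blank nodes in intermediate white layers, and that the nested search is the device separating the two. For D.1 the argument uses: (a) class-$A$ nodes (i.e.\ $\mathit{BL}_i$) have no \textbf{B}-node neighbour, so a blank $x\in\mathit{BL}_i$ passes the test; and (b) by Property~\ref{back}, any white node strictly between $\mathit{BL}_{i-1}$ and $\mathit{BL}_i$ is within $d_2-1$ steps of some class-$D$ node in $\mathit{BL}_{i-1}$, which does have a \textbf{B}-node neighbour, so such an $x$ fails the test. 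The A.1 argument is analogous after replacing ``\textbf{B}-node neighbour'' by ``white neighbour'', using $d_1\ge 2$ and $\lfloor d_2/2\rfloor\ge d_1$ to guarantee that from an intermediate white $x$ the inner search of radius $d_1-1$ really does touch a class-$A$ node in $\mathit{BL}_{i-1}$ (which has a white neighbour), while from a genuine $x\in\mathit{BL}_i$ it cannot.

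For stages D.2 and A.2, the outer search halts at black nodes; since by hypothesis $\mathit{BL}_i$ is already coloured black, the blank nodes it visits are exactly those lying on the $d(\mathit{BL}_{i-1},\mathit{BL}_i)$-white-paths from $u$, i.e., the blank elements of $\mathit{wdisc}(u)$. Colouring every visited blank node white gives the claim and leaves $G_{i-1}$ and the already coloured part of $\mathit{BL}_i$ untouched because they are not blank. I expect the main obstacle to be the A.1/D.1 distinguishability step: one must verify case by case that the witness (\textbf{B}-node neighbour for D.1, white neighbour for A.1) is reachable in the nested search from every intermediate white $x$ but from no $x\in\mathit{BL}_i$. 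This relies crucially on the arithmetic $d_1\ge 2$ and $\lfloor d_2/2\rfloor\ge d_1$ built into the $\mathcal{AL}$ scheme, and on the observation that at the moment the stage runs, the only coloured nodes are those of $G_{i-1}$ together with a subset of $\mathit{BL}_i$ whose class matches the one currently being labelled.
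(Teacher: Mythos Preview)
Your proposal is correct and follows essentially the same case analysis as the paper's own proof: stages $B.1$/$C.1$ are immediate from $d(\mathit{BL}_{i-1},\mathit{BL}_i)=1$; stages $D.1$/$A.1$ use Property~\ref{back} together with the \textbf{B}-node-neighbour (resp.\ white-neighbour) test to separate genuine $\mathit{BL}_i$ candidates from intermediate blank nodes; and stages $D.2$/$A.2$ reduce to the definition of $\mathit{wdisc}(u)$. One small sharpening: in the $*.1$ stages $\mathit{BL}_i$ is only partially coloured, so the reason the outer search cannot report a blank node beyond $\mathit{BL}_i$ is not that it halts there, but simply the layer bound $d(r,x)\le d(r,u)+d(\mathit{BL}_{i-1},\mathit{BL}_i)=d(r,\mathit{BL}_i)$ for any reported $x$.
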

\begin{proof}
Let the border layer belong to class $C$ or $B$, and
$\mathtt{stage}=C.1$ or $B.1$ accordingly. Since $G_{i-1}$ and
part of $\mathit{BL}_i$ are colored according to $L$, all blank
neighbors of $u$ are in $succ(u)$. ${Label\_Succ}(u)$ only labels
all blank neighbors of $u$ black. Therefore, ${Label\_Succ}(u)$
colors all blank nodes in ${succ}(u)$ black, not coloring any
other nodes.

Let $u\in D$, and $\mathtt{stage}=D.1$. Let $x$ be a blank nodes
reported by the local search from $u$ within radius $d_2$. In this
case, $G_{i-1}$ and part of $\mathit{BL}_i$ are colored according
to $L$, and all nodes in the working interval are blank. If $x\in
succ(u)$, nodes at distance not greater than $d_2-1$ from $x$ are
either blank nodes or black nodes in $\mathit{BL}_i$; otherwise
there is at least one node in $\mathit{BL}_{i-1}$ at distance less
than $d_2-1$ from $x$ by Property~\ref{back}. Layer
$\mathit{BL}_{i-1}$ is a $D$-layer in which every node has a
\textbf{B}-node neighbor, while each node in $\mathit{BL}_i$ has
no \textbf{B}-node neighbor. Therefore, ${Label\_Succ}(u)$ can
determine whether $x$ is in $succ(u)$. So ${Label\_Succ}(u)$
colors all blank nodes in ${succ}(u)$ black, not coloring any
other nodes.

Let $u\in A$, and $\mathtt{stage}=A.1$. Let $x$ be a blank node
visited by the local search from $u$ within radius $d_1$. If $x\in
succ(u)$, all nodes at distance not greater than $d_1-1$ from $x$
are either blank nodes or black nodes in $\mathit{BL}_i$;
otherwise some of these nodes may belong to $\mathit{BL}_{i-1}$.
Since $G_{i-1}$ has been colored according to $L$ and
$\mathit{BL}_{i-1}$ is an $A$-layer, every node in
$\mathit{BL}_{i-1}$ has a white neighbor. For nodes in the working
interval are blank in stage $A.1$, any node in $\mathit{BL}_i$ has
no white neighbor. By this observation, ${Label\_Succ}(u)$ can
determine whether $x$ is in $succ(u)$. So ${Label\_Succ}(u)$
colors all blank nodes in ${succ}(u)$ black, not coloring any
other nodes.

If $G_{i-1}$ and $\mathit{BL}_i$ are colored according to $L$ and
$\mathtt{stage}=D.2$ or $A.2$, by definition,
$\mathit{Label\_Succ}(u)$ colors all blank nodes in ${wdisc}(u)$,
not coloring any other nodes.

\end{proof}

Now we prove the correctness of the labeling algorithm.
\begin{Theorem}
By the end of the execution of the labeling algorithm taking as
input an $\mathcal{AL}$ labeling $L=\langle r,d_1,d_2 \rangle$,
the graph is fully colored according to $L$, and the robot has
explored the entire graph, terminating at the root.
\label{labelalg}
\end{Theorem}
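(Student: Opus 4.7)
The natural approach is induction on the phase index $i$, using Lemma~\ref{Label_n} and the correctness results for the (unmodified) exploration machinery from Section~\ref{exp} as black boxes, together with a careful invariant about the coloring of $G_{i-1}$ and the location of the robot. The induction hypothesis at the start of phase $i$ will be: $G_{i-1}$ is colored according to $L$, every node outside $G_{i-1}$ is still blank, the robot sits at $r$, and $\mathtt{stage}$ has the value prescribed by the state-transition diagram for the class of $\mathit{BL}_{i-1}$. The base cases $i=0,1$ are immediate from the initialization (label $r$, then label all neighbours of $r$, then return through the memorised ports).

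For the inductive step I would argue stage-by-stage. In stage $*.1$ the robot performs a DFS of the implicit spanning tree of $G_{i-1}$ using the revised $\mathit{Get\_Par\_Path}$ and $\mathit{Next\_Child\_Path}$ procedures. Because these revisions only \emph{ignore} blank nodes (except when $X\rightarrow Y$ meets the trigger $\mathtt{stage}=X.*$ listed in Table~\ref{table2}), and because by hypothesis $G_{i-1}$ has been validly coloured, the exploration behaves on the black/white part exactly as the exploration of Section~\ref{exp} behaves on an already-labelled graph; by Lemmas~\ref{GetPar}, \ref{GetChd} and \ref{DFSroot} the DFS therefore visits every border node $u\in\mathit{BL}_{i-1}$. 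Exactly when the DFS first attempts the case $X\rightarrow Y$ from such a $u$ (with $X$ the class of $\mathit{BL}_{i-1}$) the procedure sees a blank neighbour in the working interval, returns to $u$, invokes $\mathit{Label\_Succ}(u)$, and then backtracks in state $\mathtt{up}$. By Lemma~\ref{Label_n}, that single call colours exactly the blank nodes of $succ(u)$ black. Taking the union over all border nodes and using that every node of $\mathit{BL}_i$ lies in some $succ(u)$, we conclude that at the end of stage $*.1$ all of $\mathit{BL}_i$ is black and the working interval is otherwise untouched; the DFS terminates at $r$ and the state diagram advances to $*.2$ (or directly to the next phase if $\mathit{BL}_{i-1}\in B\cup C$).

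For stage $A.2$ or $D.2$ the same DFS is run on the now properly-coloured $G_{i-1}\cup\mathit{BL}_i$, and at each border node $u$ the case $X\rightarrow Y$ again triggers $\mathit{Label\_Succ}(u)$, which by the second half of Lemma~\ref{Label_n} whitens precisely the blank nodes of $\mathit{wdisc}(u)$. The fact that $\bigcup_{u\in\mathit{BL}_{i-1}}\mathit{wdisc}(u)$ covers exactly the working interval follows from Property~\ref{back}: any node strictly between $\mathit{BL}_{i-1}$ and $\mathit{BL}_i$ lies on a shortest path from some $u\in\mathit{BL}_{i-1}$ down to $\mathit{BL}_i$ of length $d(\mathit{BL}_{i-1},\mathit{BL}_i)$, hence sits in $\mathit{wdisc}(u)$. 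Since $\mathit{Label\_Succ}$ never re-colours an already coloured node, the order in which the border nodes are visited is irrelevant and the coloring agrees with $L$ on $G_i$. The robot returns to $r$, and the state diagram advances to phase $i+1$; this closes the induction.

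The main obstacle is the self-consistency of the coverage claim: one must check that every blank node the algorithm should colour in phase $i$ is reached by \emph{some} $\mathit{Label\_Succ}(u)$, and that no node outside the working interval is ever accidentally recoloured. The first part is handled by Property~\ref{back} together with the distance definitions of classes $A,B,C,D$; the second is built into the radii used in $\mathit{Label\_Succ}$ (bounded by $d_1$ or $d_2$, which does not reach past $\mathit{BL}_i$ into blank territory that belongs to a later phase, because any such node would be at least $d_1{+}1$ or $d_2{+}1$ from $u$). Termination is then simply the observation that the algorithm stops when the last phase's DFS returns to $r$ through the largest port at $r$, at which point $G=G_{\mathit{D_B}}$ is fully coloured according to $L$ and, by the argument of Section~\ref{exp}, the robot has visited every node of $G$.
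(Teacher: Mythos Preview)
Your inductive framework and invariant match the paper's proof almost exactly, and the use of Lemma~\ref{Label_n} together with Property~\ref{back} for the coverage argument is the same. However, there are two places where your argument is incomplete compared with the paper.

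First, and most importantly, you never verify that $\mathit{Label\_Succ}$ is called \emph{only} from border nodes. The trigger in Table~\ref{table2} fires whenever the robot is executing the case $X\rightarrow Y$ with $\mathtt{stage}=X.*$ and meets a blank node; but class-$X$ nodes also occur in $\mathit{BL}_{i-5},\mathit{BL}_{i-9},\ldots$, and you must rule out that $\mathit{Next\_Child\_Path}$ from one of those earlier nodes reaches a blank node. The paper does this explicitly: for a class-$X$ node $v\in\mathit{BL}_s$ with $s<i-1$ one has $s\leq(i-1)-4$, and since $\mathit{Next\_Child\_Path}$ from $v$ never reaches past $\mathit{BL}_{s+2}$, it stays inside $G_{i-1}$ and sees no blank node. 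Without this step Lemma~\ref{Label_n} cannot be invoked, since its hypothesis requires $u\in\mathit{BL}_{i-1}$.

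Second, your sentence ``Exactly when the DFS first attempts the case $X\rightarrow Y$ from such a $u$ \ldots\ the procedure sees a blank neighbour'' is not correct as stated. If, say, $X\in\{B,C\}$ and all of $succ(u)$ has already been coloured black by earlier calls (or in stage $*.2$ if $\mathit{wdisc}(u)$ is already white), the robot sees no blank node from $u$ and actually \emph{descends} to some $v\in\mathit{BL}_i$. The paper handles this case separately: it observes that all neighbours of such a $v$ were visited during $\mathit{Next\_Child\_Path}$ from $u$, hence $v$ has no blank neighbour and is a leaf; the subsequent exploration from $v$ (which ignores blanks by Table~\ref{table2}) therefore returns ``false'' and the robot backtracks safely to $u$. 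You need this argument both to ensure the DFS of $G_{i-1}$ still terminates at $r$ and to conclude that $\mathit{wdisc}(u)$ is correctly coloured even when $\mathit{Label\_Succ}(u)$ is never invoked.
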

\begin{proof}
For each $i\geq 0$, we say that $\mathit{Property}(i)$ holds at
the end of phase $i$, if \vspace{4pt}

\noindent (1) The robot colors all nodes of $G_i$ according to $L$
and returns to the root.

\noindent (2) Only nodes of $G_i$ are colored. \vspace{4pt}

We now prove that, at the end of phase $i$, $\mathit{Property}(i)$
holds. Initially, $\mathit{Property}(1)$ holds at the end of phase
1. For $i\geq 1$, assume that at the end of phase $i-1$,
$\mathit{Property}(i-1)$ holds. We prove that
$\mathit{Property}(i)$ holds at the end of phase $i$.

By the induction hypothesis, during phase $i$, all nodes of
$G_{i-1}$ are colored according to the labeling $L$, and all other
nodes are blank.

We first prove that in $X\rightarrow Y$ of
$\mathit{Next\_Child\_Path}$ from $u$, if a blank node is visited
and $\mathtt{stage}=X.*$ then $u$ is a border node. By definition,
for $v\in \mathit{BL}_s$, $\mathit{Next\_Child\_Path}$ from $v$
will not visit any node in $L_t$ such that
$t>d(r,\mathit{BL}_{s+2})$. For a class $X$ node $v\in
\mathit{BL}_s$ $(s<i-1)$, we have $s\leq (i-1)-4$, since all blank
nodes are in layers after $\mathit{BL}_{i-1}$,
$\mathit{Next\_Child\_Path}$ from $v$ will not visit any blank
node. Therefore, according to Table~\ref{table2}, $u$ is a border
node if ${Label\_Succ}(u)$ is called. The robot returns to
$\mathit{parent}(u)$ with state $\mathtt{up}$ when
${Label\_Succ}(u)$ terminates.

Suppose that for $u\in \mathit{BL}_{i-1}$, in a call to
$\mathit{Next\_Child\_Path}$ from $u$, the robot does not visit
any blank node and arrives at a child of $u$ say $v$. By
definition, all neighbors of $v$ will be visited in
$\mathit{Next\_Child\_Path}$ from $u$. Therefore, node $v$ has no
neighbor in layer after $\mathit{BL}_{i}$, and $v$ is a leaf node.
In the followed exploration from $v$, blank nodes will be ignored
(see Table~\ref{table2}).
$\mathit{Next\_Child\_Path}(v,\varnothing)$ returns ``false", and
$\mathcal{R}$ will return from $v$ to $u$ in state $\mathtt{up}$.
If all calls to $\mathit{Next\_Child\_Path}$ from $u$ do not find
a blank node then all children of $u$ are leaf nodes, which
implies that ${wdisc}(u)$ has been colored according to $L$, and
$\mathcal{R}$ returns to $\mathit{parent}(u)$ in state
$\mathtt{up}$ not visiting any node beyond $G_i$.

By the above argument, for $u\in \mathit{BL}_{i-1}$, no mater
whether ${Label\_Succ}(u)$ is called, the robot will return to
$\mathit{parent}(u)$ in state $\mathtt{up}$. For $u\notin
\mathit{BL}_{i-1}$, the blank nodes will be ignored in
explorations from $u$ (see Table~\ref{table2}). For $G_{i-1}$ is
colored correctly, by Theorem~\ref{main}, in phase $i$, all border
nodes are visited, and finally the robot returns to the root.

In the end of phase $i$, for $u\in \mathit{BL}_{i-1}$, either
${Label\_Succ}(u)$ is called or ${wdisc}(u)$ has been colored
according to $L$. By Lemma~\ref{Label_n}, for every border node
$u$, nodes in ${wdisc}(u)$ are colored correctly. Therefore all
the nodes in $G_i$ is colored correctly. Since $\bigcup_{u\in
\mathit{BL}_{i-1}}\mathit{wdisc}(u)\subseteq G_i$, only nodes of
$G_i$ are colored.

In summary, for each $i\geq 0$, $\mathit{Property}(i)$ holds at
the end of phase $i$. It follows that after $\lceil
(D+1)/(d_1+d_2+2) \rceil$ phases, the robot has fully colored and
explored the entire graph. In the end, the last phase is
performed, in which the robot finds that the exploration and the
coloring are completed.

\end{proof}

\section{Labeling Schemes Enabling Adjusting the Ratio of Black Nodes \label{S4}}
Based on $\mathcal{AL}$ labeling schemes, we introduce the
labeling schemes that allow the adjustment of the $N$-ratio. We
will prove the following in the remaining of Section~\ref{S4}.

\begin{Theorem} There exists a robot with
the property that for any $n$-node graph $G$ of degree bounded by
integer $\Delta$, it is possible to color the nodes of $G$ with
two colors (black and white), while the $N$-ratio is not less than
a given rational number $\rho\in (2,(D+1)/4]$. Using the labeling,
the robot can explore the graph $G$, starting from a node $r$ and
terminating at $r$ after identifying that the entire graph has
been traversed. The robot has $O(\rho\log\Delta)$ bits of memory,
and the total number of edge traversals by the robot is
$O(n\Delta^{\frac{16\rho+7}{3}}/\rho+\Delta^{\frac{40\rho+10}{3}})$.
\label{main2}
\end{Theorem}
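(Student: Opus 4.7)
My plan is to reduce Theorem 3 to Theorem 1 by combining an $\mathcal{AL}$ labeling with a counting/averaging argument, and then lifting the exploration algorithm of Section 2 to the derived labeling with inflated local-search radii. Given the target $\rho\in(2,(D+1)/4]$, I would first choose a base $\mathcal{AL}$ labeling $\langle r,d_1,d_2\rangle$ (with $d_1\geq 2$ and $\lfloor d_2/2\rfloor\geq d_1$) of $L$-ratio $\rho'$, together with a ``thinning factor'' $t\geq 1$, and derive from this base a finite family $\mathcal{F}$ of $t$ candidate labelings. Each candidate in $\mathcal{F}$ retains the root $r$ as black and keeps as black only the $\mathcal{AL}$-black layers that lie in one fixed residue class modulo $t$ relative to the period $d_1+d_2+2$; the other $\mathcal{AL}$-black layers are recolored white. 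Intuitively, each candidate is an $\mathcal{AL}$-style labeling in which the four-black-layer block is repeated every $t$ periods instead of every period.

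The $N$-ratio is then handled by averaging. Each $\mathcal{AL}$-black node (other than the shared root class) is black in exactly one candidate of $\mathcal{F}$, so the total number of black nodes over all $t$ candidates is at most $n/\rho'+O(|L_0|)$. By pigeonhole, some $L^{\star}\in\mathcal{F}$ has at most $n/(t\rho')+o(n)$ black nodes, which gives $N$-ratio at least $t\rho'$. Setting $t$ and $\rho'$ so that $t\rho'\geq\rho$ (while also respecting the $\mathcal{AL}$ constraint $\lfloor d_2/2\rfloor\geq d_1$) produces the desired labeling $L^{\star}$.

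The exploration argument for $L^{\star}$ should then follow by mimicking Section 2. Inside each surviving four-layer block, $L^{\star}$ is structurally identical to the base $\mathcal{AL}$, so the three properties (existence of a shortest path to $r$, white-radius upper bound, and white-radius lower bound) still hold locally; between two consecutive surviving blocks the gap widens by a factor of $t$, so the same three properties continue to hold with $d_2$ rescaled accordingly. Consequently, the procedures $\mathit{Get\_Par\_Path}$, $\mathit{Next\_Child\_Path}$, $\mathit{Is\_B}$, $\mathit{A\_or\_B}$ and $\mathit{C\_or\_D}$ all extend by replacing their search radii with values of order $td_2$; the memory needed to store the longest recorded white path becomes $O(td_2\log\Delta)=O(\rho\log\Delta)$, and the per-operation traversal count becomes $\Delta^{O(td_2)}$. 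Summing the work of the DFS over the at most $n/\rho$ black nodes of the implicit spanning tree, together with the startup cost for identifying the root, will yield the claimed time bound after substituting the optimal joint choice of $d_1,d_2,t$ that satisfies $t(d_1+d_2+2)/4\geq\rho$ and minimizes the exponents.

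The principal obstacle will be verifying that the class-distinguishing subroutines remain correct after sparsification: the argument separating $A$-nodes from $B$-nodes hinges on the asymmetry $d_1\leq\lfloor d_2/2\rfloor$, and in $L^{\star}$ this asymmetry must survive across the wide inter-block gaps, since white nodes sitting deep in a gap can now have unusually large white-radius and risk being confused with $A$-neighbours of a $B$-node. I would expect this to force the effective ``$d_2$'' used in the proofs of Lemmas~\ref{isb}--\ref{aorb} to scale like $t\cdot d_2$, which is precisely where the extra factors in the exponents $(16\rho+7)/3$ and $(40\rho+10)/3$ come from. A secondary bookkeeping issue is that the shift chosen by the pigeonhole must be one that keeps the root's $C$-layer in the surviving residue class; I would arrange this by always fixing residue class $0$ for the root and shifting only the later period groups among the $t$ candidates.
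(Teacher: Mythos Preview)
Your proposal has two genuine gaps, and the second one is fatal.

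First, the averaging step is miscounted. The sum of black-node counts over your $t$ thinned candidates is exactly $BN(\mathcal{AL})$, the number of black nodes in the base $\mathcal{AL}$ labeling (plus $O(t)$ for the shared root). There is no reason this equals $n/\rho'$: $\rho'$ is the $L$-ratio, and the whole difficulty is that the $N$-ratio $n/BN(\mathcal{AL})$ need not be close to $\rho'$ when nodes concentrate in black layers. So pigeonhole gives a candidate with $N$-ratio $\geq t\cdot\rho(\mathcal{AL})$, not $t\rho'$, and you have no control over $\rho(\mathcal{AL})$. The paper's averaging (Lemma~\ref{Ratio_main}) instead sums over all $4m$ \emph{circular shifts} of a fixed pattern with $4t$ black positions per period $4m$; then every node is black in exactly $4t$ shifts, so $\sum_k BN(P^k)=4tn$ regardless of how nodes are distributed across layers, and pigeonhole yields a shift $P^*$ with $\rho(P^*)\geq m/t=\rho$.

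Second, and more seriously, your sparsification destroys the structure that makes class identification possible. In any $\mathcal{AL}$ (or $\mathcal{MP}$) labeling the three layers $B,C,D$ are \emph{consecutive}, which forces every $C$-node to be a \textbf{B}-node (all neighbors black); this is the sole mechanism behind $\mathit{Is\_B}$ and $\mathit{C\_or\_D}$, and hence behind $\mathit{Get\_Par\_Path}$ and $\mathit{Next\_Child\_Path}$. When you keep only one period out of every $t$, the layer immediately preceding a surviving $C$-layer becomes white, so \emph{no} black node is a \textbf{B}-node any more, and $D$-nodes no longer have \textbf{B}-node neighbours. The procedures you plan to reuse then fail outright; this is not a white-radius calibration issue and cannot be repaired by rescaling $d_2$ to $t\cdot d_2$. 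The paper never thins: it builds $P$ as a union of $t$ elementary sub-labelings within one period of length $4m$, so each interval still carries an adjacent $B{-}C{-}D$ triple; circular shifting preserves all inter-layer distances, so every $P^k$ retains the \textbf{B}-node mechanism intact. A separate transformation $P^*\mapsto\hat{P^*}$ then relabels only the root unit to place $r$ in a $C$-layer, with $r$ chosen of minimum degree so that $BN$ does not increase.
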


In the remainder of the paper, word ``\emph{ratio}" refers to
``\emph{$N$-ratio}" if not mentioned.

\subsection{From $L$-ratio Tunable to $N$-ratio Adjustable}

We generalize the $\mathcal{AL}$ labeling to the \emph{periodic
layer oriented labeling} (PL in short). A PL labeling of a graph
is composed of a root node and the sets of layers are colored
black and white. A PL labeling colors the graph in a periodic
manner, that is, $L_i$ and $L_{i+p}$ are colored with the same
color where $p$ is the period. We can represent a PL labeling by a
triple $\langle r, p,BL\rangle$, where $r$ is the root, $0<p\leq
D+1$ is an integer denoting the period, and $BL$ is an integer set
on $[0,p-1]$ denoting the black layers within a period. The set of
black layers of the labeling $\langle r, p,BL\rangle$ is
$\{L_i\mid (i \ \mathtt{mod}\ p)\in BL, 0\leq i\leq D\}$. We call
the interval $[ip,(i+1)p-1]$, $i\geq 0$, the $i$th \textit{unit}
of the labeling. For example, the labeling in Figure~\ref{scheme}
can be denoted by $\langle r, 11,\{0,1,7,10\}\rangle$. Let
$S_1=\langle r, p, BL\rangle,S_2=\langle r, p, BL'\rangle$ be two
PL labeling schemes with the same root and period. The
\textit{union} of $S_1$ and $S_2$ is denoted by $S_1\cup
S_2=\langle r, p, BL\cup BL'\rangle$. Denote by $N(L_i)$ the
number of nodes in layer $L_i$ of a labeling scheme $P$. Denote by
$BN(P)$ the number of black nodes in $P$. Denote by
$\rho(P)=n/BN(P)$ the $N$-ratio of the labeling scheme $P$.

We relax some restrictions of the $\mathcal{AL}$ labeling and
define the following:

\noindent\emph{\textbf{Labeling $\mathcal{MP}$.}}
$\mathcal{MP}=\langle r,p,BL\rangle$ is a PL labeling, where
$BL=\{P_B,P_C,P_D,P_A\}$ satisfies the following properties.
\begin{itemize}
\item  $(P_C-P_B)\ \ \mathtt{ mod }\  p=1$, \item $(P_D-P_C)\ \
\mathtt{ mod }\  p=1$, \item $(P_B-P_A)\ \ \mathtt{ mod }\
p=d_{AB}$, $d_{AB}\geq 2$, \item $(P_A-P_D)\ \ \mathtt{ mod }\
p=d_{DA}$,  \item $\lfloor d_{DA}/2\rfloor\geq d_{AB}$.
\end{itemize}

We call $\mathcal{MP}$ labelings the \textit{elementary
labelings}, and any $\mathcal{AL}$ labeling is an elementary
labeling. In an elementary labeling, $p=d_{AB}+d_{DA}+2$. For
convenience, we use quadruple $\langle r,P_A,d_{AB},d_{DA}\rangle$
to denote an elementary labeling, e.g., the $\mathcal{AL}$
labeling can be denoted by $\langle r,
d_{DA}+1,d_{AB},d_{DA}\rangle$. The labeling in
Figure~\ref{scheme} can be denoted by $\langle r, 7,3,6\rangle$.
In this section, all labelings are elementary labelings or
combinations of elementary labelings.

As for $\mathcal{AL}$ labelings, we partition the black nodes in
an elementary labeling to the following four sets:
\begin{center}
\begin{flushleft}
$C=\{v\in V\mid d(r,v)\ \mathtt{ mod }\ p=P_C\}$,\\
$D=\{v\in V\mid d(r,v)\ \mathtt{ mod }\ p=P_D\}$, \\
$A=\{v\in V\mid d(r,v)\ \mathtt{ mod }\ p=P_A\}$,\\
$B=\{v\in V\mid d(r,v)\ \mathtt{ mod }\ p=P_B\}$.
\end{flushleft}
\end{center}

An $\mathcal{AL}$ labeling scheme cannot guarantee that its
$N$-ratio is not less than its $L$-ratio. The following lemma
implies a method to close the gap.

\begin{lemma}
Given a rational number $1\leq \rho\leq D+1$, let $\rho=m/t$,
where $m>0$ and $t>0$ be integers. Let $PS$ be a set of labeling
schemes of $G$ that have the same root, and $|PS|=m$. If
$\sum_{P\in PS}BN(P)=tn$, then there exists $P\in PS$ such that
$\rho(P)\geq \rho$.\label{Ratio_main}
\end{lemma}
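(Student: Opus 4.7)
The plan is to prove this by a simple averaging (pigeonhole) argument on the quantity $BN(P)$ over $P\in PS$.

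First I would unfold the definitions. Recall that $\rho(P)=n/BN(P)$, so the conclusion $\rho(P)\geq \rho=m/t$ is equivalent to the inequality $BN(P)\leq tn/m$. Hence the lemma reduces to showing that there exists $P\in PS$ with $BN(P)\leq tn/m$.

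Next I would apply averaging. Since $|PS|=m$ and $\sum_{P\in PS}BN(P)=tn$, the arithmetic mean of $BN(P)$ over $P\in PS$ equals exactly $tn/m$. If every $P\in PS$ satisfied $BN(P)>tn/m$, then summing over the $m$ labelings would give
\[
\sum_{P\in PS}BN(P) \;>\; m\cdot\frac{tn}{m} \;=\; tn,
\]
contradicting the hypothesis. Therefore at least one $P^{\ast}\in PS$ must satisfy $BN(P^{\ast})\leq tn/m$, and thus $\rho(P^{\ast})=n/BN(P^{\ast})\geq m/t=\rho$, which is exactly what we needed.

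There is essentially no obstacle here; the only mild thing to check is that $BN(P^\ast)>0$ so that $\rho(P^\ast)$ is well-defined, which is guaranteed by the construction since every labeling considered in this paper contains at least the root as a black node. The role of this lemma is to give a clean selection principle: if one can build a family of $m$ labelings whose total number of black nodes is exactly $tn$ (typically by union/partitioning elementary labelings with the same root and period), then one of them automatically achieves the target $N$-ratio $m/t$. In the sequel this will be used to convert the $L$-ratio tunable $\mathcal{AL}$ labelings into $N$-ratio adjustable ones.
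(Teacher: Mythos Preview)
Your proof is correct and is essentially the same averaging/pigeonhole argument as the paper's own proof; the only cosmetic difference is that the paper rewrites the hypothesis as $\sum_{P\in PS}1/\rho(P)=t$ and then pigeonholes on $1/\rho(P)$, whereas you pigeonhole directly on $BN(P)$. Both routes are equivalent, and your extra remark about $BN(P^\ast)>0$ is a reasonable sanity check that the paper leaves implicit.
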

\begin{proof} From
$\sum_{P\in PS}BN(P)=tn$, we have

\begin{equation}
 \frac{\sum_{P\in PS}BN(P)}{n}=\sum_{P\in
PS}\frac{BN(P)}{n}=\sum_{P\in PS}\frac{1}{\rho(P)}=t.
\end{equation}

By the pigeonhole principle, for $|PS|=m$, there exists $P\in PS$
such that $1/\rho(P)\leq t/m$. Therefore, there exists $P\in PS$
such that $\rho(P)\geq m/t=\rho$.
\end{proof}

If we find a set of labelings that satisfies
Lemma~\ref{Ratio_main}, then we can find a labeling where the
$N$-ratio is not less than a given rational number. To generate
such labelings, we introduce the circular shifts of a labeling.
For a labeling $P=\langle r, p,BL\rangle$ and an integer $0\leq
l\leq D$, denote $P^l=\langle r, p,BL^l\rangle$, where
$BL^l=\bigl\{i\mid \bigl((i-l) \ \mathtt{mod}\ p\bigr)\in BL,
0\leq i< p \bigr\}$, called a \textit{circular shift} of $P$. We
give some $N$-ratio adjustable labeling schemes as follows.

Let $\rho\in[4, (D+1)/4]$ be an integer, and let $S=\langle
r,4\rho, BL\rangle$, where $\mathit{BL}=\{0,5,6,7\}$. We have that
$\bigcup_{i=0}^{\rho-1}\mathit{BL}^{4i}= [0,4\rho-1]$, and for any
$0\leq i,j\leq\rho-1$, $\mathit{BL}^{4i}\cap
\mathit{BL}^{4j}=\varnothing$. Let
$H=\{S^0,S^4,\ldots,S^{4(\rho-1)}\}$. We have
$\sum_{i=0}^{\rho-1}\mathit{BN}(S^{4i})=n$. By
Lemma~\ref{Ratio_main}, there exists $S^{4j}\in H$ such that
$\rho(S^{4j})\geq \rho$. This method does not work for $\rho=3$.
We give a solution in Figure~\ref{rho} where we use six different
labeling schemes $T_1,\ldots,T_6$ with period 12, and each layer
is colored black by exactly two labeling schemes. Thus
$\sum_{i=1}^{6}\mathit{BN}(T_i)=2n$. By Lemma~\ref{Ratio_main},
there is $T_i\in\{T_1,\ldots,T_6\}$ such that $\rho(T_i)\geq 3$.

\begin{figure}[htbp]
\centering
\begin{minipage}[t]{0.5\linewidth}
   \begin{center}
   \includegraphics[width=\textwidth]{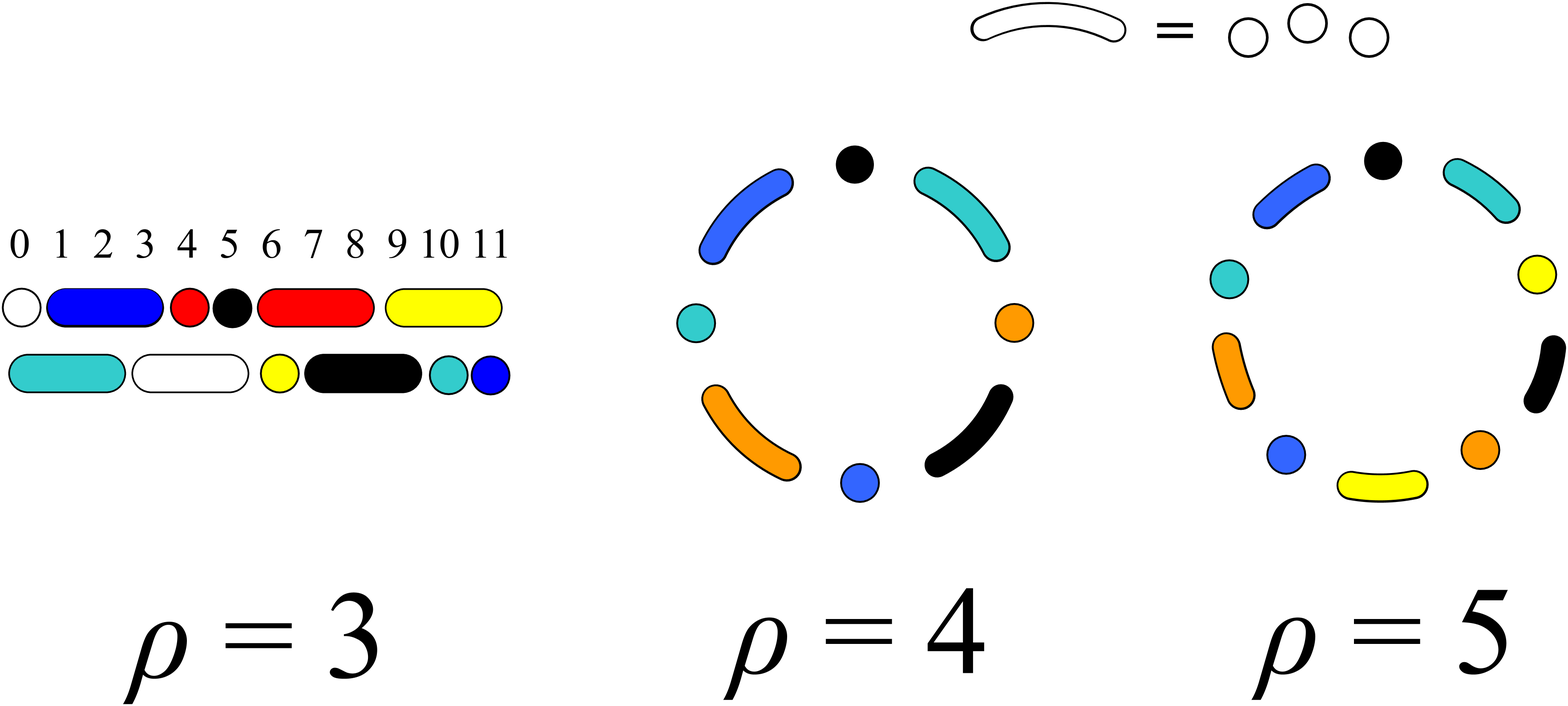}
   \end{center}
   \caption{Examples of adjustable labeling schemes. One unit of each labeling scheme is drawn. A dot represents a layer, and a line represents three adjacent layers.
   Layers colored similarly belong to the same labeling scheme.}%
   \label{rho}%
\end{minipage}
\hspace{10mm}
\begin{minipage}[t]{0.40\linewidth}
   \begin{center}
   \includegraphics[width=\textwidth]{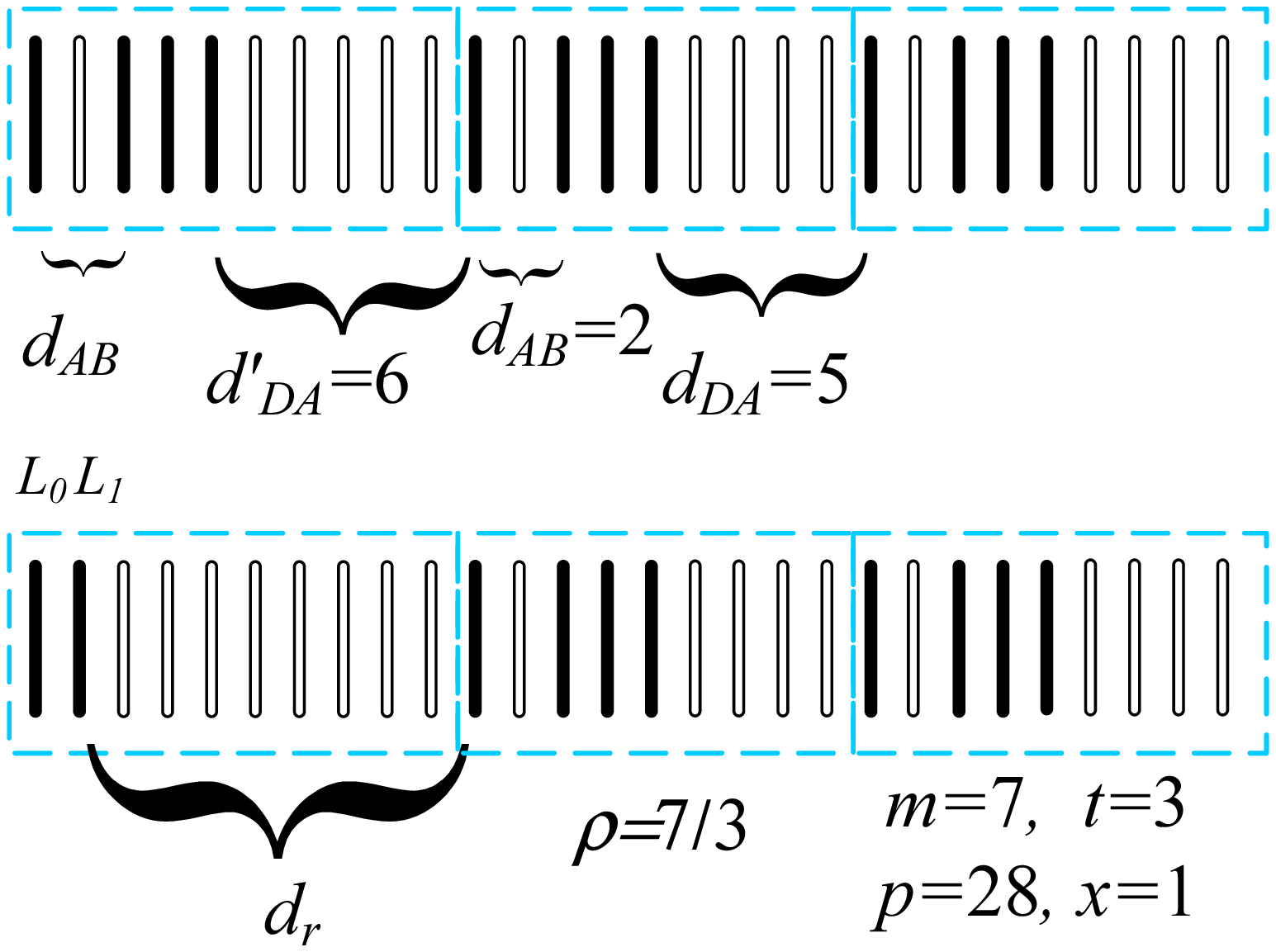}
   \end{center}
   \caption{ Above is a unit of a labeling scheme with a rational $\rho$ where $m=7$, $t=3$. Below is the root unit of $P$. Intervals in a unit are in dotted boxes.}%
   \label{r_rho}%
\end{minipage}

\end{figure}

\subsection{$N$-ratio Adjustable Labeling Schemes}

In this subsection, we introduce a general method to construct the
$N$-ratio adjustable labelings based on Lemma~\ref{Ratio_main}. We
only discuss the cases where $\rho\geq 2$. For $1<\rho< 2$, we can
compute $\rho''=\rho/(\rho-1)\geq 2$. A labeling with $N$-ratio
$\rho$ can be derived from a labeling with $N$-ratio $\rho''$ by
reversing the color of each node.

Given a rational number $\rho\geq 2$, let $\rho=m/t$ where $m$ and
$t$ are relatively prime. The idea is to find a labeling scheme
$P=\langle r, 4m, BL\rangle$ where $|BL|=4t$, $\rho(P)=\rho$. Let
$D+1\geq 4m$, we demonstrate that such $P$ exists. If $m$ is so
huge that $D+1<4m$, we have to find $m'$ and $t'$ that are
relatively primes such that $\rho<m'/t'$ and $D+1\geq 4m'$. Then
we try to find a labeling scheme $P=\langle r, 4m', BL\rangle$
where $|BL|=4t'$, $\rho(P)=m'/t'>\rho$. Let the length of the unit
be $p=4m$, $x=4m\ \mathtt{mod} \ t$. Partition each unit into $t$
disjoint \textit{interval}s. The first $x$ intervals are of length
$\lceil p/t\rceil$, the others are of length $\lfloor p/t\rfloor$.
Let $d_{AB}=\lfloor(\lfloor p/t\rfloor-2)/3\rfloor$,
$d_{DA}=\lfloor p/t\rfloor-2-d_{AB}$, and $d_{DA}'=\lceil
p/t\rceil-2-d_{AB}$. We have $\lfloor d_{DA}/2\rfloor\geq d_{AB}$
and $\lfloor d'_{DA}/2\rfloor\geq d_{AB}$. The following $t$
elementary labelings with the same root and period can be derived.

\begin{displaymath}
S_i = \left\{ \begin{array}{ll}
\langle r,\lceil 4m/t\rceil i,d_{AB},4m-d_{AB}-2 \rangle, &\quad\quad\textrm{$0\leq i\leq x-1$},\\
\langle r,\lceil 4m/t\rceil x+\lfloor
4m/t\rfloor(i-x),d_{AB},4m-d_{AB}-2 \rangle,
&\quad\quad\textrm{$x\leq i\leq t-1$.}
\end{array} \right.
\end{displaymath}

Let $P=\bigcup^{t-1}_{i=0}S_i=\langle r, 4m, BL\rangle$. We have
$|BL|=4t$. In $P$, we classify the nodes into four classes
$A,B,C$, and $D$. Class $X\in\{A,B,C,D\}$ of $P$ is the union of
class $X$ of all $S_i$ ($0\leq i\leq t-1$). There are totally $4m$
circular shift labelings of $P$. For every layer $L_k$, there are
$4t$ circular shift labelings of $P$ where $L_k$ is labeled black.
Therefore, $\sum_{k=0}^{4m-1}BN(P^k)=4tn$. By
Lemma~\ref{Ratio_main}, there exists a circular shift of $P$, say
$P^*$, such that $\rho(P^*) \geq m/t$. An example of $P$ is shown
in Figure~\ref{r_rho}.

\subsubsection{Transformation of $P^*$}

We call the labelings where $r$ is in class~$C$ the \textit{{$R^C$
labelings}}. All $\mathcal{AL}$ labelings are $R^C$ labelings. Let
$P^*$ be the labeling such that $\rho(P^*)\geq \rho$, we can see
that $P^*$ is not necessarily a $R^C$ labeling. In this section,
we give a method to transform $P^*$ which is not a $R^C$ labeling
to a $R^C$ labeling; the exploration algorithm for $\mathcal{AL}$
labelings can be used after minor revisions. The transformation of
$P^*$ is as follows.

Label $r$ and its neighbors black. Let the first $A$-layer in
$P^*$ be $L_l$. If $\lfloor (l-1)/2\rfloor\geq d_\mathit{AB}$, we
label the layers between $L_1$ and $L_l$ white. Otherwise, we
label the layers between $L_1$ and the second $A$-layer white if
this layer exists. If there is only one $A$-layer, then we label
the layers after $L_1$ white. Denote the resulted labeling by
$\hat{P^*}$. We redefine the units of $\hat{P^*}$ as follows: the
interval $[0,d_r]$ is the $0$th unit called {\textit{root unit}},
where $d_r$ is either the distance between the root and the first
$A$-layer if $A$-layers exist or the diameter of $G$ if no
$A$-layers exist; the interval $[(i-1)p+d_r,ip+d_r-1]$ $(i\geq 1)$
is the $i$th \textit{unit}. We have $d_r\geq d_{DA}$.

It is possible that $\mathit{BN}(\hat{P^*})> \mathit{BN}(P^*)$,
and therefore $\rho(\hat{P^*})< \rho$. To make sure
$\rho(\hat{P^*})\geq \rho$, we modify the transformation as
follows. The root is chosen as a node with the minimal number of
neighbors, say $\Delta'$. Label $L_0$ and $L_1$ black. If there
exists an $A$-layer in $P^*$, say $L_k$, such that there is only
one $C$-layer before $L_k$, we label the layers between $L_1$ and
$L_k$ white. If no such $A$-layer exists, the diameter $D$ of $G$
is so short that we label the layers after $L_1$ white.

We prove that $\rho(\hat{P^*})\geq \rho$ as follows. Suppose that
$L_k$ exists. Let $nb_1$ be the total number of black nodes in
layers before $L_k$ in $P^*$, and let $nb_2=N(L_1)+1=\Delta'+1$ be
that in $\hat{P^*}$ which is the number of neighbors of $r$ plus
1. We have $\mathit{BN}(P^*)-\mathit{BN}(\hat{P^*})= nb_1-nb_2$.
In $P^*$, before layer $L_k$, there are a $C$-layer and a
$D$-layer, thus if the root is in a $C$-layer in $P^*$, then $P^*$
and $\hat{P^*}$ are similar; otherwise there are three adjacent
$B$, $C$, and $D$ layers before $L_k$. Because $\Delta'$ is the
minimal number of neighbors of a node
in the graph and all the neighbors of nodes in the middle
$C$-layer are involved in the three adjacent black layers, the
number of black nodes in these three layers is not less than
$\Delta'+1=nb_2$. Therefore, $nb_1-nb_2\geq 0$. Thus
$\mathit{BN}(P^*)-\mathit{BN}(\hat{P^*})\geq 0$. We have
$\rho(\hat{P^*})\geq \rho$.

Suppose that $L_k$ does not exist. Since $\rho\leq (D+1)/4$, there
are at least four black layers in the first unit of $P^*$. In this
case, we have $D\leq p+d_2-1$, and in $P^*$ there is only one
$A$-layer, and there are only one $B$-layer and one $C$-layer
after this $A$-layer. When there are three adjacent $B$, $C$, and
$D$ layers after the $A$-layer, based on the above discussions, we
have $\rho(\hat{P^*})\geq \rho$. When there are no three adjacent
$B$, $C$, and $D$ layers after the $A$-layer, the last two layers
are a $B$-layer followed by a $C$-layer. Since all the neighbors
of the nodes in the last $C$-layer are involved in the last two
black layers, the number of black nodes in the last two black
layers is not less than $\Delta'+1=nb_2$. Therefore,
$nb_1-nb_2\geq 0$, and
$\mathit{BN}(P^*)-\mathit{BN}(\hat{P^*})\geq 0$. As a result, we
have $\rho(\hat{P^*})\geq \rho$.

\subsubsection{Exploration Algorithm }

We revise the graph exploration algorithm in Section~\ref{S2} to
explore the graph labeled by $\hat{P^*}$ as follows. First, the
memory of $\mathcal{R}$ increases to $O(d_r\log \Delta)$ bits.
Second, add a 1-bit flag $fr$. If $\mathcal{R}$ is in the root
unit, $fr=1$, otherwise $fr=0$. Third, in the following cases,
$\mathcal{R}$ first determines the distance between a $D$-layer
and the adjacent $A$-layer (we call this distance ``$d_2$" of the
current interval) is $\mathit{d_{DA}}$ or $\mathit{d_{DA}}+1$ or
$d_r$ as follows.

(1) $D\rightarrow A$ of $\mathit{Next\_Child\_Path}$.

Assume that $\mathcal{R}$ is currently in a $D$-layer node $u$. If
$fr=1$, we set $d_2=d_r$ and execute the procedure. If
$D\rightarrow A$ succeeds we set $fr=0$.

Let $fr=0$. We first determine whether $u$ is a leaf node; if not,
we determine the distance between a $D$-layer and the adjacent
$A$-layer. Then we backtrack from $u$ or call $D\rightarrow A$
with the correct $d_2$. The distinguishing procedure is as
follows. Perform a local search from $u$ within radius $d_{DA}$
($\mathit{\mathit{LS}}_1$). If a black node in $A$ is visited then
$d_2=d_{DA}$. If no class $A$ node is visited then perform a local
search from $u$ within radius $d_{DA}+1$
($\mathit{\mathit{LS}}_2$). If a black node $v$ in $A$ is visited
then perform a local search from $v$ within radius $\lceil
d_{DA}/2\rceil$\footnote[6]{If $d_{DA}$ is even, $\lceil
d_{DA}/2\rceil=d_{DA}/2$. If $d_{DA}$ is odd, $\lceil
d_{DA}/2\rceil=\lfloor d_{DA}/2\rfloor+1$}. For each white node
$x$ reported, check whether $R_W(x)=\lfloor d_{DA}/2\rfloor-1$,
and if so, perform a local search within radius $\lfloor
d_{DA}/2\rfloor$ from $x$. If a node with a \textbf{B}-node
neighbor is reported, we have $d_2=d_{DA}$ and $u$ is a leaf node;
otherwise $d_2=d_{DA}+1$. If no class $A$ node is found in
$\mathit{LS}_1$ and $\mathit{LS}_2$ then $u$ is a leaf node.

(2) $A\rightarrow D$ of $\mathit{Get\_Par\_Path}$.

Assume that $\mathcal{R}$ is currently in an $A$-layer node $u$.
We first set $d_2=d_{DA}$ and call the procedure. If $A\rightarrow
D$ fails to find the parent of $u$, we set $d_2=d_{DA}+1$ and redo
$A\rightarrow D$. If it fails again, we set $d_2=d_r$ and $fr=1$
and redo the procedure.

Now we consider the space and the time complexity of the
exploration algorithm. For $d_{DA}=\lfloor p/t\rfloor-2-d_{AB}$,
we have $d_{DA}=\lfloor 4\rho\rfloor-2-\lfloor(\lfloor
4\rho\rfloor-2)/3\rfloor\leq\frac{8\rho-4}{3}$. If $L_0$ is a
$D$-layer in $P^*$, then $\hat{P^*}$ has the maximal $d_r$. In
this case, $d_r=d_{DA}+1+\lceil p/t\rceil\leq\frac{20\rho-1}{3}$.
Thus, the memory of $\mathcal{R}$ is still
$O(\rho\mathrm{log}\Delta )$. Since $d_r\geq d_{DA}$, the number
of edge traversals in exploring the root unit is increased
comparing with $\mathcal{AL}$ labelings. The increased number of
traversals is
$O(\Delta^2\Delta^{2d_{r}+2})=O(\Delta^{\frac{40\rho+10}{3}})$.
The total number of edge traversals is
$O(n\Delta^{2(d_{DA}+1)+3}/\rho+\Delta^{2d_{r}+3})=O(n\Delta^{\frac{16\rho+7}{3}}/\rho+\Delta^{\frac{40\rho+10}{3}})$.

\subsection{Labeling Algorithm}
We use the algorithm in Section~\ref{labeling} with minor
revisions to label a graph according to $\hat{P^*}$. The
parameters of $\hat{P^*}$ are determined by system designers,
including: $r$, $d_{AB}$, $d_{DA}$, and $d_r$. The robot takes as
input these parameters and labels the graph. The revisions of the
exploration procedures are as follows. When $\mathcal{R}$ explores
from a $D$-layer node or an $A$-layer node, the robot has to know
whether the distance from the $D$-layer to the adjacent $A$-layer
(denoted by $d_2$) is $d_{DA}$ or $d_{DA}+1$. We define a variable
$c$ of $\lg t$ bits to indicate that $\mathcal{R}$ is in the $c$th
interval in a unit. Let there be $j$ intervals before the first
$A$-layer of $\hat{P^*}$ in the first unit of ${P^*}$. According
to the definition of $\hat{P^*}$, $d_2=d_{DA}+1$ if $(c+j+t) \
\mathtt{mod}\ t< 4m \ \mathtt{mod}\ t$, and $d_2=d_{DA}$
otherwise. In this description, all arithmetic operations are
modulo $t$. Initially, variable $c$ is set to $t-1$. $c$ increases
by 1 after $\mathcal{R}$ traversed from a $D$-layer node down to
an $A$-layer node and decreases by 1 after $\mathcal{R}$ traversed
from an $A$-layer node up to a $D$-layer node.

When starting from a class-$A$ node or a class-$D$ node, by $c$
and $fr$, the robot knows exactly $d_2$ of the current interval.
So the original exploration procedures in Subsection~\ref{exp} can
be used to explore the graph when $c$ and $fr$ is introduced.

Procedure ${Label\_Succ}$ does not need revision, since the robot
knows $d_2$ of the current interval. Then using the revised
exploration algorithm in this section, the labeling algorithm in
Section~\ref{labeling} can label the graph according to
$\hat{P^*}$.

\section{Future Work}
Further interesting questions include whether there exist labeling
schemes that are not spanning tree based, and whether there exists
a labeling algorithm for an $\mathcal{AL}$ labeling that only uses
two colors. The parameters of the $N$-ratio adjustable labeling
scheme, i.e., the root, are determined by system designers. A
question is whether there exists a finite state automaton that
takes as input a valid $N$-ratio and labels the graph accordingly.

\section*{Acknowledgement}
The authors thanks Leszek G\c{a}sieniec and the anonymous referees
for their constructive suggestions.

\end{document}